\documentclass[11pt]{article}

\usepackage{setspace}
\usepackage[margin=1in]{geometry}

\usepackage{natbib}
\bibpunct[, ]{(}{)}{,}{a}{}{,}

\usepackage{amsmath,amssymb,amsthm}
\usepackage{graphicx}
\usepackage{booktabs}
\usepackage{tabularx}
\usepackage{multirow}
\usepackage{float}
\usepackage{makecell}
\usepackage[colorlinks=true, linkcolor=blue, citecolor=blue,
            urlcolor=blue]{hyperref}

\newtheorem{lemma}{Lemma}
\newtheorem{proposition}{Proposition}
\newtheorem{corollary}{Corollary}
\newtheorem{assumption}{Assumption}
\newtheorem{remark}{Remark}

\title{The Security Cost of Intelligence:\\
AI Capability, Cyber Risk, and Deployment Paradox}

\author{Sukwoong Choi\thanks{Assistant Professor, Massry School of Business, University at Albany, SUNY; Research Fellow, MIT FutureTech. Email: \href{mailto:schoi27@albany.edu}{schoi27@albany.edu}.}}

\date{April 24, 2026}

\begin{document}

\maketitle

\begin{abstract}
Firms are deploying more capable AI systems, but organizational controls often
have not kept pace. These systems can generate greater productivity gains, but
high-value uses require broader authority exposure---data access, workflow
integration, and delegated authority---when governance controls have not yet
decoupled capability from authority exposure. We develop an analytical model in
which a firm jointly chooses AI deployment and cybersecurity investment under
this governance-capability gap. The central result shows a deployment paradox:
in high-loss environments, better AI can lead a firm to deploy less when
capability is deployed through broader authority exposure under weak governance.
Optimal deployment also falls below the no-risk benchmark, and this shortfall
widens with breach-loss magnitude and with the authority exposure attached to
more capable systems. Governance investment that reduces breach-loss magnitude
shrinks the paradox region itself, while breach externalities expand the range
of environments in which deployment is socially constrained. Governance
maturity is therefore not merely a constraint on AI adoption. It is a condition
that shapes whether capability improvements translate into productive deployment.
\end{abstract}

\bigskip
\noindent\textbf{Keywords:} artificial intelligence, AI deployment,
organizational cybersecurity, cyber risk, deployment paradox

\newpage

\section{Introduction}

In many current enterprise deployments, governance maturity has not kept pace
with operational integration. High-value uses of capable AI systems often
require broad data access, workflow integration, and delegated authority. We
use the term \textit{authority exposure} to refer to this bundle of access,
workflow reach, and delegated action rights. When governance controls cannot
limit the authority exposure attached to these systems, more capable
deployments can become harder to contain. In November
2025, Anthropic disclosed a large-scale AI-orchestrated cyber espionage
campaign in which a jailbroken Claude Code agent automated major stages of
attack execution \citep{anthropic2025espionage}. The same organizational
features that make AI productive---deep data access, cross-system integration,
and autonomous decision-making---can also expand the attack surface and
increase breach consequences when governance has not kept pace
\citep{kumar2020, microsoft2024}. Among organizations that reported AI-related
security incidents, 97\% lacked proper AI access controls \citep{ibm2025}. A
2026 survey by the Cloud Security Alliance found that 74\% of organizations
give AI agents more privileges than necessary, 79\% report agents creating
hard-to-monitor access pathways, and only 22\% apply access-control frameworks
consistently \citep{csa2026}.

These findings do not mean that AI capability and operational authority are
inherently inseparable. In principle, the two can be decoupled: organizations
can grant AI systems access only to the data and decision rights each task
requires, rather than broad default permissions---an approach known in
cybersecurity practice as Zero Trust architecture and least-privilege design
\citep{nist2020zerotrust}. Yet they suggest that governance maturity has not
kept pace with capability advancement. We call this misalignment the
\textit{governance-capability gap}. By governance, we mean the organizational
controls that shape what a deployed AI system can access and do---access
controls, data segmentation, privilege boundaries, and containment design that
limit how far a breach can propagate. These controls, rather than cybersecurity
spending alone, shape how safely capable AI can be deployed and how costly the
gap becomes when governance lags.

Indeed, this gap carries a direct economic cost. The average cost of a data
breach reached \$4.88 million in 2024 \citep{ibm2024}. In sectors with high
breach costs, such as healthcare and financial services, the AI systems that
promise substantial productivity gains are often deployed with broad authority
and deep data access---environments in which the gap can leave firms especially
exposed.

The IS literature examines how firms deploy AI as part of digital
transformation and shows that realizing AI's value requires organizational
governance that differs from traditional IT \citep{bharadwaj2013, vial2019,
collins2021ai, holmstrom2022, linmaruping2025}. Research on organizational
cybersecurity demonstrates that IT investment, governance arrangements, and
organizational complexity shape breach outcomes \citep{li2023it,
liu2020centralized, durcikova2024united, tanriverdi2025taming}. What remains
less developed across these streams is a clear model of how the
governance-capability gap reshapes firms' deployment incentives and affects
whether capability improvements help or hurt deployment.

We address this question by jointly modeling AI deployment and cybersecurity
investment under the governance-capability gap. The key premise is not that AI
capability mechanically causes greater breach damage. Rather, some high-value
uses of more capable AI require greater authority exposure. When governance
controls have not matured enough to separate capability from that exposure,
the authority attached to capable systems increases the damage that can follow
from compromise. We refer to this organizational condition as
\textit{capability--damage bundling}. Throughout the paper, this term refers
to capability-linked authority exposure under governance lag, not to a direct
technological effect of capability on damage.

This bundling defines the model's scope. We do not claim that capability and
authority exposure always move together. The IBM and CSA surveys are
consistent with this condition, but they do not directly identify
capability--authority co-variation \citep{ibm2025, csa2026}. We therefore
study a specific setting: firms deploy more capable AI with greater authority
exposure before governance controls have matured. In this setting, capability
improvements can increase the damage that follows from compromise. Firms can
weaken or eliminate the condition through data segmentation, least-privilege
architecture, access-control maturation, and containment design.

Our central result shows a deployment paradox. In high-loss environments, an
improvement in AI capability can reduce, rather than increase, optimal
deployment when that capability is deployed with greater authority exposure.
More capable AI raises productivity \citep{brynjolfsson2023generative,
acemoglu2024simple}. Yet under capability--damage bundling it also raises the
security burden attached to deployment. When the marginal increase in the
expected security burden exceeds the marginal productivity gain, the firm
rationally deploys less. The paradox is therefore not that cyber risk lowers
deployment. It is that, after the firm optimally adjusts security investment,
an improvement in AI capability can still reduce optimal deployment.

The mechanism has two components. Broader deployment expands the attack
surface and increases the defensive expenditure needed to sustain a given
security posture. Greater capability increases the damage conditional on
compromise only when capability is bundled with greater authority exposure.
In sufficiently high-loss environments, these security effects can dominate
the productivity effect over an economically relevant range of capability
levels. The paradox then resolves either when capability becomes high enough
for productivity to dominate again, or when governance investment reduces the
loss environment that the firm faces.

The model generates three further implications. Optimal deployment lies below
the productivity-maximizing benchmark by a security discount that grows with
both breach-loss magnitude and capability. Governance investments that reduce
breach-loss magnitude shrink the paradox region itself. Breach externalities
expand the social paradox region by increasing the loss environment relevant
for welfare. The result also identifies its own boundary condition: when
governance investment decouples capability from authority exposure, or when
the loss environment is sufficiently mild, the standard monotone relationship
between capability and deployment is restored.

We contribute to IS research in three ways. First, to the literature on AI
deployment and digital transformation, we show a new comparative-static
possibility under organizational risk \citep{bharadwaj2013, vial2019,
collins2021ai, holmstrom2022, linmaruping2025}. Existing models imply that
better AI weakly increases deployment \citep{acemoglu2022artificial,
agrawal2018prediction, agrawal2019exploring}. Our model shows that this
monotonicity breaks down when governance maturity lags capability advancement:
capability improvements can also raise the security burden of deployment when
they are implemented with greater authority exposure under the governance gap.

Second, we endogenize the firm's cyber risk environment through its technology
choice. This connects IS research on organizational cybersecurity
\citep{li2023it, liu2020centralized, durcikova2024united, tanriverdi2025taming}
to AI deployment decisions. In the canonical framework of
\citet{gordon2002economics}, breach risk is exogenous. Here, AI deployment
under the gap determines both attack surface and breach consequences, creating
a feedback loop that prior models do not capture.

Third, we characterize the deployment implications of the governance-capability
gap as a structural organizational condition. We show that once the gap creates
capability--damage bundling, capability improvements can increase the
deployment-related security burden faster than productivity gains, generating
a deployment paradox. The core comparative static does not require a unique
micro-foundation for the gap. It requires a weaker condition: under governance
lag, more capable deployments carry greater authority exposure. By isolating
this condition, the model clarifies when stronger AI expands deployment and
when it can instead make deployment less attractive.

\section{Related Literature}

This work builds on three streams of research.

\paragraph{AI Deployment and Digital Transformation.}
The IS literature has examined digital business strategy, digital
transformation, and the organizational conditions that shape AI deployment
\citep{bharadwaj2013, vial2019, collins2021ai, holmstrom2022}. Realizing
AI's value requires more than technical adoption; it demands complementary
changes in governance structures, workflow redesign, and capability
development \citep{linmaruping2025}. Our model builds on this work by showing
how the governance-capability gap can reverse the usual relationship between
capability and deployment. When governance lags, more capable systems can
become more costly to integrate deeply because they require stronger
containment and access-control safeguards.

Analytical models of AI deployment have generally predicted that stronger
capability increases deployment value \citep{acemoglu2022artificial,
agrawal2018prediction, agrawal2019exploring}. Our model departs from this
prediction. Under the gap, many high-value AI deployments create value by
being deeply embedded in workflows and granted broader decision rights---yet
these same features can also amplify breach damage. The model identifies
conditions under which this tension produces a sign reversal.

\paragraph{Organizational Cybersecurity.}
IS research shows that cybersecurity outcomes depend on IT investment,
governance arrangements, organizational empowerment, and structural complexity
\citep{li2023it, liu2020centralized, durcikova2024united, tanriverdi2025taming}.
For example, \citet{tanriverdi2025taming} find that enterprise-wide data
analytics platforms can reduce complexity and improve security outcomes, while
\citet{liu2020centralized} show that centralized IT decision-making lowers
breach likelihood. These studies highlight how organizational design shapes
the security consequences of technology deployment. Related analytical work in
information security has also examined how firms evaluate security investment
under economic tradeoffs, using both decision-theoretic and game-theoretic
approaches to characterize optimal protection choices
\citep{gordon2002economics, cavusoglu2004model, cavusoglu2008decision}.
More broadly, the economics of privacy and information security emphasizes that
data exposure can create substantial organizational and social costs
\citep{acquisti2016economics}.

We take a different approach. Under gap conditions, the firm's deployment
decision itself jointly determines both the attack surface and the
consequences of a breach. Unlike \citet{gordon2002economics}, where
vulnerability is treated as exogenous, our formulation makes both the
attack-surface term and the breach-loss magnitude term endogenous to the
firm's technology choice under the governance-capability gap. The central
contrast is not only that risk becomes endogenous. A capability improvement
can reverse the deployment comparative static when technology choice jointly
changes exposure and conditional loss.

\paragraph{Dual-Use AI and Governance.}
AI simultaneously expands capability and introduces new organizational risk
\citep{choi2025dual, hendrycks2023overview}. Related IS work emphasizes that
managers must actively govern AI deployment \citep{berente2021managingai}.
Our model offers a tractable explanation of why governance can become a binding
constraint: when maturity lags capability, the same improvement that raises
productivity can also increase the security burden when it is implemented with
greater authority exposure. In high-loss environments, this dynamic can make
frontier AI deployment value-reducing and identifies governance investment as
one mechanism for realizing AI's productive value.

Taken together, these streams explain how firms obtain value from AI, how
organizational design shapes cybersecurity outcomes, and how firms choose
security investment under economic tradeoffs. They say less about settings in
which the technology choice itself changes the firm's cyber risk environment.
Our model focuses on this case. Under the governance-capability gap, AI
deployment determines both the attack surface and the conditional loss
environment through authority exposure. This authority-exposure linkage
generates a comparative static that prior frameworks do not typically predict:
stronger capability can reduce, rather than increase, optimal deployment.

Methodologically, the paper follows the tradition of parsimonious analytical
models that derive counterintuitive comparative statics from minimal structure,
as in \citet{bakos1999bundling}, \citet{choudhary2007software}, and
\citet{bova2022quantum}.

\section{Model}

This section formalizes the firm's joint deployment and security problem. The
firm chooses AI deployment $\alpha$ and security investment $d$. The
environment is characterized by AI capability $\theta$, conditional
breach-loss magnitude $\lambda$, and complementary organizational readiness
$\mu$. The governance-capability gap enters through an authority-exposure
index $a(\theta)$, which links capability to conditional breach damage when
governance has not decoupled capability from operational exposure.

\subsection{Setup}
\label{subsec:setup}

Consider a firm deciding how deeply to deploy AI and how much to invest in
cybersecurity. Not all AI deployment carries the same security implications.
Narrow, siloed tools---operating on restricted data without cross-system
authority---add little to cyber exposure. The key tension is the
\textit{governance-capability gap}: high-value uses of more capable AI are
often implemented with broader authority exposure than governance controls
can safely contain, because access-control and containment maturity has not
kept pace. Under this condition, capability can be associated with greater
authority exposure in practice, and we refer to this linkage as
\textit{capability--damage bundling}.

\paragraph{Two margins of the firm's security problem.}
We distinguish two margins. \textit{Security investment} $d \geq 0$ is a
within-period flow: monitoring, threat detection, incident response, and
access-control enforcement that reduces breach probability given the current
deployment level. \textit{Governance maturity} determines $\lambda$: the
inherited architecture of access controls, data segmentation, privilege
boundaries, and containment design that governs how much damage a successful
breach can cause. The governance-capability gap refers to the slow adjustment
of this second margin relative to the pace of capability advancement.

\paragraph{Conceptual map of the model's parameters.}
The model has four key parameters. $\alpha \geq 0$ captures \textit{deployment
scope}---the organizational breadth over which AI is integrated. $\theta > 0$
indexes \textit{AI capability level}---the productive power of each deployed
unit. The governance gap links this capability to operational exposure, denoted
by an authority-exposure index $a(\theta)$, which summarizes the operational
exposure attached to a deployed system of capability $\theta$. $\lambda > 0$
captures \textit{conditional breach loss magnitude}---residual loss severity
given compromise, reflecting containment quality, data segmentation, and
regulatory liability. In the baseline model, we treat $\lambda$ as exogenous
to isolate how AI capability affects deployment under a given security
environment; Section~\ref{sec:productivity} and
Appendix~\ref{app:endogenous_gov} discuss this interpretation in more detail.

$\mu > 0$ captures \textit{complementary organizational readiness}---productivity
gains from process standardization, workflow redesign, and organizational
learning that raise deployment value without expanding the damage channel.

The $\mu$--$\lambda$ distinction matters for strategy. $\mu$-raising
investments work on the productivity side; $\lambda$-reducing investments work
on the security side. Both improve performance, but through different channels.

\paragraph{Formalizing capability--damage bundling.}
The model's central organizational condition is that high-value uses of more
capable AI may require greater operational exposure when governance lags
capability. We formalize this condition as follows.

\begin{assumption}[Capability-Linked Authority Exposure under Governance Lag]
\label{assump:bundling}
Under the governance-capability gap, high-value deployment of more capable AI
systems requires greater authority exposure. Let $a(\theta)$ denote the
authority-exposure index associated with capability $\theta$, with
$a'(\theta)\ge 0$. The baseline model sets $a(\theta)=\theta$, so conditional
breach damage takes the tractable form $L(\alpha,\theta)=\lambda\alpha\theta$.
\end{assumption}

\noindent Assumption~\ref{assump:bundling} is a scope condition, not an
empirical regularity claimed to hold for all AI systems. The monotonicity of
$a(\theta)$ defines the class of deployments studied here: settings in which
firms pursue the productive value of more capable AI by embedding the system
more deeply into organizational workflows. When governance decouples capability
from authority exposure, the relevant case approaches $a'(\theta)=0$, and the
deployment paradox weakens or disappears. This boundary condition also gives
the mechanism empirical content: the model predicts no deployment paradox in
settings where capability improvements do not carry greater authority exposure.
The condition $a'(\theta)>0$ alone does not imply the deployment paradox. The
sign reversal arises only when the marginal increase in the exposure-linked
security burden exceeds the marginal productivity gain over some capability
range.

A useful way to interpret the baseline loss function is through this
authority-exposure index. Conditional breach damage can be written as
$L(\alpha,\theta)=\lambda\alpha a(\theta)$. The baseline normalization
$a(\theta)=\theta$ does not mean that technical capability itself mechanically
causes damage. It means that, in the organizational environments studied here,
higher-capability deployments are implemented with greater operational exposure
unless governance controls decouple the two. Appendix~\ref{app:generalized_L}
sets $a(\theta)=\theta^\gamma$ and shows how the paradox changes as the
elasticity of authority exposure with respect to capability varies.

\paragraph{The firm's problem.}
The firm chooses two instruments: AI deployment level ($\alpha \geq 0$;
deployment scope) and security investment ($d \geq 0$; defensive expenditure).
The firm's profit is:
\begin{equation}
    \pi(\alpha, d) =
        \underbrace{(\theta + \mu)\alpha - \frac{\alpha^2}{2}}_{\text{AI productivity}}
        - \underbrace{\frac{\alpha}{\alpha + d} \cdot \lambda\alpha\theta}_{
        \text{Expected security loss}}
        - \underbrace{d}_{\text{Security investment}}
    \label{eq:profit}
\end{equation}

Deployment $\alpha$ enters through two channels: it raises both breach
probability (via the attack-surface term $\alpha/(\alpha+d)$) and breach
damage (via the propagation-scale term $\lambda\alpha\theta$).\footnote{The
model focuses on incremental cyber risk from AI deployment. Baseline non-AI
risk is treated as exogenous and separable, standard in security economics,
to isolate how the gap shapes the firm's marginal security environment. By
convention, $p(0,d)=0$ for all $d\ge 0$, so that zero deployment yields zero
incremental breach probability.} Profit should be interpreted as expected
present value over the focal decision horizon: current deployment benefits,
current defensive expenditure, and the discounted expected loss from future
breach consequences are all summarized in the single-period objective.

The joint optimization of $(\alpha, d)$ reflects the standard treatment in
security economics. \citet{gordon2002economics}---the canonical framework our
model extends---jointly optimizes security investment and breach probability as
a single integrated decision; we follow that convention and additionally allow
the technology choice $\alpha$ to endogenize the cyber risk environment. The
simultaneity also reflects observed enterprise practice: security architecture
reviews, access-control scoping, and deployment planning are integrated into
AI adoption rather than treated sequentially \citep{ibm2025, csa2026}. The
ex ante framing captures this integration---the firm anticipates breach
probability and damage conditional on its chosen $\alpha$ before committing
to it.

\subsection{AI Productivity}
\label{sec:productivity}

The term $(\theta + \mu)\alpha - \frac{\alpha^2}{2}$ captures the net
productivity benefit of AI deployment. It has three components.

\paragraph{Capability-driven value ($\theta\alpha$).}
$\theta$ captures the capability level of each deployed AI unit. Higher
$\theta$ generates greater value through prediction, automation, and decision
support \citep{brynjolfsson2023generative, acemoglu2024simple}. The model
separates this productivity effect from the authority-exposure channel in
Assumption~\ref{assump:bundling}: capability raises productivity directly, and
it raises conditional breach damage only when deployed with broader authority
exposure under governance lag.

\paragraph{Complementary organizational readiness ($\mu\alpha$).}
$\mu$ captures the productivity gains from process standardization, workflow
integration, and organizational learning \citep{lee2022ai}. These investments
improve deployment value without expanding the damage channel, because they do
not co-vary with authority exposure under the gap. This is why $\mu$ enters
productivity but not the loss function. A richer specification could allow
limited spillover from $\mu$ into the damage channel; this changes the location
of the paradox region but does not overturn the core sign-reversal result
(Appendix~\ref{app:mu_damage}).

\paragraph{Convex deployment cost ($\alpha^2/2$).}
The quadratic term captures the convex organizational costs of workflow
redesign, integration friction, and coordination overhead; we normalize the
scaling parameter to 1 without loss of generality, as rescaling is absorbed
into the units of $\alpha$. Without cybersecurity concerns, the firm would
choose $\alpha^0 = \theta + \mu$. The deployment variable $\alpha$ is
interpreted as a deployment-intensity index rather than a literal share bounded
between 0 and 1; the quadratic term ensures that gross AI productivity is
eventually offset by coordination and integration frictions.

\paragraph{The exogeneity of $\lambda$.}
We treat $\lambda$ as exogenous because it represents the firm's inherited
breach-loss environment rather than a contemporaneous choice. Security
investment $d$ is a within-period flow: the firm adjusts it in response to
current deployment and exposure conditions, reducing breach probability through
monitoring, detection, and incident response. $\lambda$ moves more slowly. It
reflects accumulated segmentation architecture, privilege design, containment
quality, and propagation boundaries---features built up over years of prior
investment that do not adjust one-for-one with current security spending.
Governance restructuring faces sharply rising costs: tightening containment
architecture becomes disproportionately expensive at the margin. Even a
forward-looking firm cannot close the gap overnight. We therefore study the
within-period problem in which the firm chooses $\alpha$ and $d$ taking
$\lambda$ as given. Accordingly, $\lambda$ should be interpreted as the firm's
ex ante expected conditional breach-loss environment at the time of the
deployment decision, not as a realized breach outcome known with certainty.

This abstraction isolates a within-period question: given the inherited loss
environment a firm currently faces, how do deployment and security investment
respond? Appendix~\ref{app:endogenous_gov} shows that when governance
restructuring entails convex organizational costs, a firm with a positive
incentive to reduce $\lambda$ may rationally leave a residual gap in
equilibrium---what we term \textit{rational security debt}. Comparative
statics with respect to $\lambda$ should therefore be read as comparisons
across inherited governance environments, or as the long-run effect of prior
governance investment that shifts the loss environment before the deployment
decision. In either interpretation, closing the gap shrinks the paradox region
and raises optimal deployment.%
\footnote{$\lambda$ should be read as a reduced-form index with both
industry-level and firm-level components. The calibration in
Section~\ref{sec:numerical} uses industry-level breach-cost data, capturing
the former. The discussion in Section~\ref{sec:discussion} emphasizes the
latter: over longer horizons, firms can reduce $\lambda$ through investments
in containment quality, data segmentation, and post-breach propagation
control.}

\subsection{Attack Surface and Breach Probability}

\begin{equation}
    p(\alpha, d) = \frac{\alpha}{\alpha + d}
    \label{eq:breach_prob}
\end{equation}
gives the probability of successful compromise over the AI-enabled incremental
exposure margin, holding baseline non-AI security fixed, and is modeled as a
standard Tullock contest function \citep{gordon2002economics, tullock1980}.
Under the baseline Tullock form, it can also be interpreted as the fraction
of AI-attributable exposure left undefended. As deployment deepens under the
gap, each additional AI agent, API connection, and autonomous workflow creates
a new potential entry point, so the attack surface expands with
$\alpha$.%
\footnote{We adopt the baseline Tullock form ($\beta=1$) for tractability.
Section~\ref{subsec:beta} and Appendix~\ref{app:generalized_p} show that under
$\beta>1$---when governance failures allow superlinear attack-surface
growth---the attack-surface channel can additionally strengthen sign reversal
locally under interior feasibility.} In the contest formulation, $\alpha$ and
$d$ enter as dimensionless, normalized intensities of exposure and defense
rather than raw physical or monetary units. Their ratio therefore captures the
relative balance between the attack surface created by deployment and the
defensive coverage allocated against it.

\subsection{Breach Damage}

Let $a(\theta)$ denote the authority-exposure index attached to a deployed
system of capability $\theta$ under governance lag. Conditional breach damage
can be written as $L(\alpha,\theta)=\lambda\alpha a(\theta)$. The baseline
model sets $a(\theta)=\theta$, yielding
\begin{equation}
    L(\alpha, \theta) = \lambda\alpha a(\theta)=\lambda\alpha\theta.
    \label{eq:loss}
\end{equation}
This expression represents reduced-form incremental breach damage under the
gap. $\lambda$ is the \textit{conditional breach loss magnitude}---reflecting
containment quality, data segmentation, and regulatory liability. Broader
deployment ($\alpha$) increases propagation scale. Capability affects damage
only through the authority exposure attached to deployment under
Assumption~\ref{assump:bundling}. By contrast, investments associated with
$\mu$ raise productivity without expanding the damage channel.\footnote{$\lambda$
governs only the loss channel $L$; the breach-probability channel
$p(\alpha,d)$ is modeled separately.}

\subsection{Expected Security Loss}

\begin{equation}
    \text{Expected Security Loss} = p(\alpha, d) \cdot L(\alpha, \theta) =
    \frac{\lambda\alpha^2\theta}{\alpha + d}
    \label{eq:expected_loss}
\end{equation}

The expected-loss formulation follows standard practice in security economics.
A risk-averse firm would additionally penalize variance in breach outcomes,
which would widen the paradox region further. The expected-value approach
therefore provides a lower bound on the security cost facing organizations
with standard risk preferences.

The paper focuses on the deployment problem in environments where positive AI
deployment remains economically relevant. The condition $\lambda < \mu + 1$
formalizes this boundary. When it fails, breach-loss magnitude is so severe
relative to complementary organizational readiness that the firm optimally
shuts down AI deployment altogether ($\alpha^*=0$). That is a straightforward
corner solution, not the sign-reversal paradox that is the paper's focus.

An asterisk denotes an optimal value throughout: $\alpha^*$, $d^*$, and $p^*$
refer to the firm's optimal deployment, defensive expenditure, and equilibrium
breach probability, respectively.

\begin{assumption}[Positive Deployment]
    \label{assump:interior}
    $\lambda < \mu + 1$, ensuring $\alpha^* > 0$ for all $\theta > 0$.
\end{assumption}

The condition defines the economic boundary for positive deployment: a firm
with higher organizational readiness $\mu$ can sustain positive deployment
under a wider range of loss environments, while a firm facing very high
breach-loss magnitude $\lambda$ may find even modest AI deployment
unprofitable. Assumption~\ref{assump:interior} excludes only the extreme case
in which the gap is so severe that no deployment is worthwhile at any
capability level.

When $\lambda>1$, this condition ensures the global minimum of $\alpha^*$,
attained at $\theta=\lambda$, remains strictly positive:
$\alpha^*_{\min} = \mu + 1 - \lambda > 0$ (Appendix~\ref{app:global}).
When $\lambda\le 1$, $\alpha^*$ is increasing in $\theta$ throughout and the
positive-deployment condition is automatically satisfied.

When $\lambda\theta > 1$, the firm invests in security ($d^* > 0$); when
$\lambda\theta \le 1$, $d^* = 0$. The boundary $\theta = 1/\lambda$ separates
these regimes.

\section{Analysis}

\subsection{Optimal Security Investment}

\begin{lemma}[Optimal Security Investment]
    \label{lem:dstar}
    For a given deployment level $\alpha > 0$:
    \begin{equation}
        d^*(\alpha, \theta) =
        \begin{cases}
            0, & \lambda\theta \le 1, \\[4pt]
            \alpha\bigl(\sqrt{\lambda\theta} - 1\bigr), & \lambda\theta > 1.
        \end{cases}
        \label{eq:d_star}
    \end{equation}
\end{lemma}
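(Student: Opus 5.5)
Fix $\alpha>0$ and treat the profit in \eqref{eq:profit} as a function of $d\ge 0$ alone. Only the expected-loss term and the linear cost depend on $d$, so the subproblem is to minimize
\[
g(d)=\frac{\lambda\alpha^2\theta}{\alpha+d}+d \qquad\text{over } d\ge 0 .
\]
The plan is to establish strict convexity of $g$ on $[0,\infty)$, solve the first-order condition, and then check whether the unconstrained stationary point is feasible.

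\textbf{Convexity.} Differentiating gives
\[
g'(d)=1-\frac{\lambda\alpha^2\theta}{(\alpha+d)^2},\qquad g''(d)=\frac{2\lambda\alpha^2\theta}{(\alpha+d)^3}>0,
\]
since $\lambda,\theta,\alpha>0$. So $g$ is strictly convex on $[0,\infty)$. Its minimizer is therefore unique and is characterized by the Kuhn--Tucker conditions: either $d>0$ with $g'(d)=0$, or $d=0$ with $g'(0)\ge 0$.

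\textbf{Case split.} Setting $g'(d)=0$ gives $(\alpha+d)^2=\lambda\theta\alpha^2$. Because $\alpha+d>0$, the only relevant root is $\alpha+d=\alpha\sqrt{\lambda\theta}$, that is, $d=\alpha(\sqrt{\lambda\theta}-1)$.
\begin{itemize}
\item If $\lambda\theta>1$, this candidate is strictly positive. It is an interior stationary point of a strictly convex function, hence the global minimizer on $[0,\infty)$.
\item If $\lambda\theta\le 1$, then $g'(0)=1-\lambda\theta\ge 0$. Strict convexity makes $g'$ strictly increasing, so $g'(d)>0$ for all $d>0$. Thus $g$ is minimized at the corner $d^*=0$.
\end{itemize}
Together these two cases give \eqref{eq:d_star}.

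\textbf{Main obstacle.} The only delicate point is the boundary case $\lambda\theta=1$. There the stationary point coincides with the corner, and both formulas in \eqref{eq:d_star} give $0$, so the assignment of the boundary to the first case is consistent. I would also note that $\alpha>0$ is needed so that the convention $p(0,d)=0$ plays no role and the expression $\alpha/(\alpha+d)$ is well defined even at $d=0$.
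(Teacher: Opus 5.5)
Your proof is correct and follows essentially the same route as the paper: differentiate in $d$, use the sign of the derivative at $d=0$ (your $g'(0)=1-\lambda\theta$ is the negative of the paper's $\lambda\theta-1$), solve $(\alpha+d)^2=\lambda\alpha^2\theta$ in the interior case, and invoke strict concavity of profit in $d$ (equivalently, strict convexity of your $g$) for uniqueness. Your observation that $g'$ is strictly increasing, so that $g'(d)>0$ for all $d>0$ when $\lambda\theta\le 1$, states explicitly the step the paper gives only verbally when it says the marginal benefit of defense ``never exceeds'' its cost.
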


\begin{proof}
    Evaluating $\partial\pi/\partial d$ at $d=0$ yields $\lambda\theta - 1$.
    If $\lambda\theta \le 1$, the marginal benefit of defense never exceeds
    its cost, so the non-negativity constraint binds and $d^*=0$. Economically, 
    this is the case in which the combined effect of capability and breach-loss 
    magnitude is too small to justify dedicated defensive expenditure.

    If $\lambda\theta > 1$, setting the first-order condition $\partial\pi/\partial d = 0$ yields
    \[
    \frac{\lambda \alpha^2 \theta}{(\alpha+d)^2}=1
    \quad\Longrightarrow\quad
    \alpha+d=\alpha\sqrt{\lambda\theta}
    \quad\Longrightarrow\quad
    d = \alpha(\sqrt{\lambda\theta}-1) > 0.
    \]
    Because $\partial^2\pi/\partial d^2 = -2\lambda\alpha^2\theta/(\alpha+d)^3 < 0$, 
    strict concavity in $d$ confirms this is a unique global maximum. See 
    Appendix~\ref{app:proof_dstar} for the complete derivation.
\end{proof}

The formula $d^* = \alpha(\sqrt{\lambda\theta}-1)$ has a direct operational
reading. For each unit of AI deployed, the firm must commit
$\sqrt{\lambda\theta}-1$ units of defensive expenditure at the interior
optimum. This multiplier rises with both breach-loss magnitude $\lambda$ and
capability $\theta$: under gap conditions, more capable systems require more
defense per unit of deployment to hold breach probability constant. A firm
upgrading from $\theta_L$ to $\theta_F > \theta_L$ therefore faces a higher
per-unit defense burden even before the productivity gain materializes. When
$\lambda\theta \le 1$, the expected marginal benefit of AI-specific defense
falls below its cost, and the firm allocates zero to that margin.

\subsection{Breach Probability Under Optimal Defense}

\begin{lemma}[Equilibrium Breach Probability]
    \label{lem:pstar}
    Under optimal security investment,
    \begin{equation}
        p^*(\theta) =
        \begin{cases}
            1, & \lambda\theta \le 1, \\[4pt]
            \dfrac{1}{\sqrt{\lambda\theta}}, & \lambda\theta > 1.
        \end{cases}
        \label{eq:p_star}
    \end{equation}
    In neither regime does $p^*$ depend on $\alpha$.
\end{lemma}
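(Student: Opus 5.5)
The plan is to substitute the optimal defense rule from Lemma~\ref{lem:dstar} directly into the breach-probability function $p(\alpha,d)=\alpha/(\alpha+d)$ in \eqref{eq:breach_prob}, treating $\alpha>0$ as given. Because Lemma~\ref{lem:dstar} already splits into two regimes according to the sign of $\lambda\theta-1$, I will follow the same case split and evaluate $p$ in each regime separately.

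\emph{Case $\lambda\theta\le 1$.} Lemma~\ref{lem:dstar} gives $d^*=0$. Hence
\[
p^*=\frac{\alpha}{\alpha+0}=1
\]
for every $\alpha>0$. The case $\alpha=0$ needs a remark: there the convention $p(0,d)=0$ from the footnote applies. I will state that the lemma concerns positive deployment, which Assumption~\ref{assump:interior} guarantees at the optimum. This is the only place where care is needed. The main subtlety of the whole argument is being explicit that the formula is for $\alpha>0$, so that $\alpha+d^*>0$ and the ratio is well defined.

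\emph{Case $\lambda\theta>1$.} Lemma~\ref{lem:dstar} gives $d^*=\alpha(\sqrt{\lambda\theta}-1)$. Therefore
\[
\alpha+d^*=\alpha\sqrt{\lambda\theta}
\qquad\text{and}\qquad
p^*=\frac{\alpha}{\alpha\sqrt{\lambda\theta}}=\frac{1}{\sqrt{\lambda\theta}}.
\]
Since $\lambda\theta>1$, this value lies strictly in $(0,1)$. The two regime formulas also agree at the boundary $\lambda\theta=1$, since both equal $1$ there. I would note this continuity in passing.

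\emph{Independence of $\alpha$.} This follows from a homogeneity observation. The optimal defense $d^*$ is linear in $\alpha$, and $p$ is homogeneous of degree zero in $(\alpha,d)$. So the $\alpha$ cancels in both regimes, and $p^*$ depends only on the product $\lambda\theta$.

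No step poses a real obstacle. The argument is a direct substitution once Lemma~\ref{lem:dstar} is in hand.
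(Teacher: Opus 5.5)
Your proposal is correct and follows the paper's proof exactly: substitute $d^*$ from Lemma~\ref{lem:dstar} into $p(\alpha,d)=\alpha/(\alpha+d)$ in each regime and observe that $\alpha$ cancels. Your remarks on restricting to $\alpha>0$ and on degree-zero homogeneity are sound clarifications of the same argument, not a different route.
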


\begin{proof}
    Corner: If $\lambda\theta\le 1$, then $d^*=0$, so $p^*=\alpha/\alpha=1$. 
    At this corner, the AI-attributable incremental exposure margin is left fully undefended.
    
    Interior: If $\lambda\theta>1$, substituting $d^*=\alpha(\sqrt{\lambda\theta}-1)$ into
    $p(\alpha,d)=\alpha/(\alpha+d)$ gives:
    \[
    p^* = \frac{\alpha}{\alpha+\alpha(\sqrt{\lambda\theta}-1)} = \frac{\alpha}{\alpha\sqrt{\lambda\theta}} = \frac{1}{\sqrt{\lambda\theta}}.
    \]
    In both regimes $\alpha$ cancels, so $p^*$ is independent of deployment. See 
    Appendix~\ref{app:proof_pstar} for the complete derivation.
\end{proof}

When $\lambda\theta \le 1$, the corner solution $p^*=1$ requires careful
interpretation. It does \emph{not} mean that the firm suffers a firm-wide
breach with certainty. Rather, $p^*$ measures exposure over the AI-attributable
incremental margin. At the corner, the firm assigns no defensive effort to
that margin while baseline non-AI security remains fixed. Thus, $p^*=1$ means
full exposure of the incremental AI-specific margin, not certainty of a
firm-wide breach. The corner solution reflects an economic decision: when the
marginal cost of defense exceeds the expected benefit from reducing breach
probability over the AI-enabled margin, no investment in AI-specific defense
is worthwhile. At the boundary $\theta=1/\lambda$, both expressions yield
$p^*=1$.\footnote{$\partial p^*/\partial\theta=0$ in the corner and
$-\frac{1}{2}\sqrt{\lambda/\theta^3}$ in the interior; this does not affect
the smoothness of $\alpha^*$ or $\Delta$, which are $C^1$ at $\theta=1/\lambda$
(Proposition~\ref{prop:discount}).}

The independence of $p^*$ from $\alpha$ under $\beta=1$ reflects an endogenous
organizational response. As deployment expands the attack surface, a rational
firm scales its defensive expenditure $d^*$ proportionally, so the probability
effect cancels in equilibrium. Deployment therefore does not create an
additional comparative-static effect through equilibrium breach probability.

What deeper deployment changes is the security burden required to sustain that
probability. Specifically, the firm must bear both the defense bill
$d^*=\alpha(\sqrt{\lambda\theta}-1)$ and the residual damage
$p^*\!\cdot\!L=\alpha\sqrt{\lambda\theta}$. The sign reversal in
Proposition~\ref{prop:paradox} therefore arises from the
authority-exposure-amplified security burden, not from an increase in
equilibrium breach probability itself. Section~\ref{subsec:beta} shows that
when $\beta>1$, the attack-surface channel can additionally reinforce this
effect.

\subsection{Reduced-Form Profit and Optimal Deployment}

\begin{lemma}[Reduced-Form Profit]
    \label{lem:reduced}
    \begin{equation}
        \pi(\alpha;\theta) =
        \begin{cases}
            \bigl[\mu + \theta(1-\lambda)\bigr]\alpha - \dfrac{\alpha^2}{2},
            & \lambda\theta \le 1, \\[8pt]
            \bigl[\theta + \mu + 1 - 2\sqrt{\lambda\theta}\bigr]\alpha
                - \dfrac{\alpha^2}{2},
            & \lambda\theta > 1.
        \end{cases}
        \label{eq:reduced_profit}
    \end{equation}
\end{lemma}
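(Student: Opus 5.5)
The plan is to substitute the optimal security investment from Lemma~\ref{lem:dstar} back into the profit function \eqref{eq:profit}. This yields the concentrated (value-function) profit $\pi(\alpha;\theta)\equiv\pi(\alpha,d^*(\alpha,\theta))$ for each fixed $\alpha>0$. The two cases of the lemma correspond exactly to the two regimes of Lemma~\ref{lem:dstar}, so I will split on $\lambda\theta\le 1$ versus $\lambda\theta>1$. In each case I will use Lemma~\ref{lem:pstar} to replace the attack-surface factor $\alpha/(\alpha+d^*)$ by $p^*(\theta)$, which does not depend on $\alpha$. The expected loss then becomes $p^*(\theta)\lambda\alpha\theta$, which is linear in $\alpha$, and everything reduces to collecting linear coefficients.

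\emph{Corner regime.} When $\lambda\theta\le 1$, we have $d^*=0$ and $p^*=1$. The expected security loss is $\lambda\alpha\theta$ and there is no defense cost. Profit is then $(\theta+\mu)\alpha-\alpha^2/2-\lambda\theta\alpha$. Grouping gives the coefficient $\mu+\theta(1-\lambda)$ on $\alpha$, as claimed.

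\emph{Interior regime.} When $\lambda\theta>1$, we have $d^*=\alpha(\sqrt{\lambda\theta}-1)$ and $p^*=1/\sqrt{\lambda\theta}$. The residual expected damage is $\lambda\alpha\theta/\sqrt{\lambda\theta}=\alpha\sqrt{\lambda\theta}$, and the defense bill is $\alpha\sqrt{\lambda\theta}-\alpha$. Their sum, the total security burden, is $\alpha(2\sqrt{\lambda\theta}-1)$. Subtracting it from productivity gives the coefficient $\theta+\mu+1-2\sqrt{\lambda\theta}$, matching \eqref{eq:reduced_profit}.

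The only point needing care is justifying that substituting $d^*$ is legitimate, i.e.\ that the reduced form is the correct objective for choosing $\alpha$. Since the firm chooses $(\alpha,d)$ jointly, $\max_{\alpha,d}\pi=\max_\alpha\max_{d\ge0}\pi(\alpha,d)$. Lemma~\ref{lem:dstar} already gives the unique inner maximizer for every $\alpha>0$, by strict concavity in $d$ and the corner check at $d=0$. I will also note the convention $p(0,d)=0$, so that at $\alpha=0$ both expressions give $\pi=0$ consistently. Finally, I will remark that the two expressions agree at $\lambda\theta=1$: there $2\sqrt{\lambda\theta}-1=1=\lambda\theta$, so the coefficients coincide and the reduced profit is continuous across the regime boundary. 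This is a useful sanity check for the later $C^1$ claims.
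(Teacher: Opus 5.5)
Your proposal is correct and follows the paper's proof essentially step for step. You substitute $d^*$ and $p^*$ from Lemmas~\ref{lem:dstar}--\ref{lem:pstar} into \eqref{eq:profit} regime by regime and collect the linear coefficient, splitting the interior security burden $\alpha(2\sqrt{\lambda\theta}-1)$ into residual damage and defense exactly as the paper does; your extra remarks (the $\max_\alpha\max_{d\ge 0}$ justification, the $\alpha=0$ convention, and coefficient continuity at $\lambda\theta=1$) are valid and simply make the argument slightly more explicit.
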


\begin{proof}
    In each regime, substitute the optimal $d^*$ and $p^*$ from
    Lemmas~\ref{lem:dstar}--\ref{lem:pstar} into $\pi(\alpha,d)$ and collect
    terms in $\alpha$. See Appendix~\ref{app:proof_lemma3} for the complete
    derivation.
\end{proof}

\begin{proposition}[Optimal AI Deployment]
    \label{prop:alpha_star}
    Under Assumption~\ref{assump:interior}:
    \begin{equation}
        \alpha^*(\theta) =
        \begin{cases}
            \mu + \theta(1-\lambda), & \lambda\theta \le 1, \\[4pt]
            \theta + \mu + 1 - 2\sqrt{\lambda\theta}, & \lambda\theta > 1,
        \end{cases}
        \label{eq:alpha_star}
    \end{equation}
    with $\alpha^*(\theta)>0$ for all $\theta>0$. Both expressions coincide
    at $\theta=1/\lambda$.
\end{proposition}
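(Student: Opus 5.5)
Since Lemma~\ref{lem:dstar} gives $d^*$ for every fixed $\alpha>0$, I would first reduce the joint problem to a one-dimensional one in $\alpha$ by the envelope (concentration) argument. For each $\alpha$ the inner problem in $d$ has the value given by Lemma~\ref{lem:reduced}. Hence $\max_{\alpha,d}\pi=\max_{\alpha}\pi(\alpha;\theta)$. The boundary $\alpha=0$ deserves a separate word: by the convention $p(0,d)=0$, profit there is $-d$, maximized at $d=0$ with value $0$. This agrees with the limit of the reduced form as $\alpha\downarrow 0$, so the reduced profit is valid on all of $\alpha\ge 0$.

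In each regime the reduced profit has the form $c(\theta)\alpha-\alpha^2/2$, which is strictly concave. Its maximizer on $\alpha\ge0$ is therefore $\max\{c(\theta),0\}$. Reading off $c(\theta)$ from Lemma~\ref{lem:reduced} gives the two displayed formulas, provided $c(\theta)>0$. So the substantive step is positivity.

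In the corner regime $\lambda\theta\le1$, the coefficient is $\mu+\theta(1-\lambda)$.
\begin{itemize}
\item If $\lambda\le1$, this is at least $\mu>0$.
\item If $\lambda>1$, it is linear and decreasing in $\theta$ on $(0,1/\lambda]$. Its minimum is $\mu+1/\lambda-1$ at $\theta=1/\lambda$, which is positive because $\mu>\lambda-1>1-1/\lambda$, using Assumption~\ref{assump:interior} and $\lambda>1$. (In fact $\lambda-1-(1-1/\lambda)=(\lambda-1)^2/\lambda\ge0$.)
\end{itemize}

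In the interior regime, write $s=\sqrt{\theta}$. Then
\[
c=s^2-2\sqrt{\lambda}\,s+1+\mu=(s-\sqrt{\lambda})^2+\mu+1-\lambda .
\]
The completed square shows $c\ge\mu+1-\lambda>0$ by Assumption~\ref{assump:interior}, with the minimum attained at $\theta=\lambda$. This is admissible in the interior regime only when $\lambda^2>1$, which matches the paper's remark. When $\lambda\le1$ the interior regime has $\theta>1/\lambda\ge\lambda$, and $c$ is increasing there. So the value exceeds its value at $\theta=1/\lambda$, which is positive by the corner case.

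Continuity at $\theta=1/\lambda$ is a direct check. The corner formula gives $\mu+1/\lambda-1$. The interior formula gives $1/\lambda+\mu+1-2\sqrt{\lambda/\lambda}=\mu+1/\lambda-1$, so the two coincide.

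The main obstacle I expect is the positivity argument, specifically ensuring Assumption~\ref{assump:interior} covers both regimes uniformly. The completing-the-square identity is the key device for the interior regime. The corner regime with $\lambda>1$ needs the small inequality above. I would also note the mild subtlety that, for $\alpha=0$, Lemma~\ref{lem:dstar} is not stated, which the concentration step handles.
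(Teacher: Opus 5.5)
Your proposal is correct and takes essentially the paper's route: concentrate out $d$ via Lemmas~\ref{lem:dstar} and~\ref{lem:reduced}, then maximize the concave quadratic $c(\theta)\alpha-\alpha^2/2$. Positivity then follows from the interior minimum $\mu+1-\lambda$ at $\theta=\lambda$ and the corner bound $\mu+1/\lambda-1>0$; your identity $(\lambda-1)^2/\lambda\ge 0$ is the paper's AM--GM step, and your completed square $(\sqrt{\theta}-\sqrt{\lambda})^2+\mu+1-\lambda$ replaces the calculus argument in Appendix~\ref{app:global}, while your explicit handling of the constraint $\alpha\ge 0$ and the $\alpha=0$ convention adds a bit of rigor the paper leaves implicit.
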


\begin{proof}
    Each regime's reduced-form profit from Lemma~\ref{lem:reduced} is a concave quadratic in $\alpha$
    ($\partial^2\pi/\partial\alpha^2=-1<0$), so the FOC is sufficient. The two expressions coincide at
    $\theta=1/\lambda$. When $\lambda>1$, the global minimum occurs at
    $\theta=\lambda$ (Appendix~\ref{app:global}). See
    Appendix~\ref{app:proof_alpha_star} for the complete derivation.
\end{proof}

Lemma~\ref{lem:reduced} shows that once the firm chooses defensive spending
optimally within a given regime, its remaining decision is how broadly to
deploy AI ($\alpha$). In each regime, the firm therefore chooses $\alpha$ by
weighing the marginal productivity gain from broader deployment against the
marginal security burden implied by that regime. In the corner regime, where
AI-specific defense is not yet worthwhile, deployment rises with complementary
organizational readiness $\mu$ and rises with capability only when the loss
environment is sufficiently mild. In the interior regime, the firm continues
to broaden deployment only if the added productivity from greater capability
exceeds the combined burden of residual breach damage and defensive
expenditure. The two regime-specific expressions connect smoothly at the
boundary because the firm changes its response rule once defense becomes
worthwhile, not because deployment jumps discontinuously.

\subsection{Main Results}

\begin{proposition}[Security Discount]
    \label{prop:discount}
    Let $\alpha^0=\theta+\mu$. The security discount
    $\Delta\equiv\alpha^0-\alpha^*$ is
    \begin{equation}
        \Delta(\theta) =
        \begin{cases}
            \lambda\theta, & \lambda\theta \le 1, \\[4pt]
            2\sqrt{\lambda\theta}-1, & \lambda\theta > 1.
        \end{cases}
        \label{eq:discount}
    \end{equation}
    $\Delta$ is continuously differentiable at $\theta=1/\lambda$, increasing
    in $\lambda$ and $\theta$, and independent of $\mu$.
\end{proposition}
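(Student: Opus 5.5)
The plan is to read $\Delta$ directly off Proposition~\ref{prop:alpha_star} by subtracting each regime's expression from $\alpha^0=\theta+\mu$, and then to verify the three qualitative properties one at a time.

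In the corner regime $\lambda\theta\le 1$, we have $\Delta=\theta+\mu-\mu-\theta(1-\lambda)=\lambda\theta$. In the interior regime $\lambda\theta>1$, we have $\Delta=\theta+\mu-(\theta+\mu+1-2\sqrt{\lambda\theta})=2\sqrt{\lambda\theta}-1$.

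Independence from $\mu$ is then immediate. In both regimes $\mu$ enters $\alpha^0$ and $\alpha^*$ additively with unit coefficient, so it cancels. The regime boundary $\theta=1/\lambda$ also does not involve $\mu$.

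For $C^1$ smoothness at $\theta=1/\lambda$, I will check two things.
\begin{itemize}
\item \emph{Continuity.} Both branches equal $1$ at $\lambda\theta=1$.
\item \emph{Matching one-sided derivatives.} In $\theta$, the corner branch gives $\lambda$. The interior branch gives $\sqrt{\lambda/\theta}$, which equals $\lambda$ at $\theta=1/\lambda$.
\end{itemize}
The same computation in $\lambda$ gives $\theta$ on both sides, since $\sqrt{\theta/\lambda}=\theta$ at $\lambda=1/\theta$.

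A cleaner way to package this is to write $\Delta=g(x)$ with $x=\lambda\theta$, where $g(x)=x$ for $x\le1$ and $g(x)=2\sqrt{x}-1$ for $x>1$. One then checks $g(1^-)=g(1^+)=1$ and $g'(1^-)=g'(1^+)=1$. By the chain rule, $\Delta$ is $C^1$ in $(\theta,\lambda)$ jointly.

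Monotonicity follows from the same representation. We have $g'(x)=1$ on $x<1$ and $g'(x)=1/\sqrt{x}>0$ on $x>1$, with $g$ continuous at $1$, so $g$ is strictly increasing on $(0,\infty)$. Because $x=\lambda\theta$ is strictly increasing in each of $\lambda$ and $\theta$ (both positive), $\Delta$ is strictly increasing in each.

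The only step needing care is the kink point. The derivative argument must be stated on the open regimes and glued by continuity, or equivalently by the matched one-sided derivatives. That matching is exactly what delivers $C^1$ rather than mere continuity.

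I will also remark that these are statements about the formula $\alpha^0-\alpha^*$ under Assumption~\ref{assump:interior}. That assumption guarantees $\alpha^*>0$, so no corner at $\alpha=0$ truncates the expression.
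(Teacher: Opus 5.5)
Your proposal is correct and follows the paper's proof essentially step for step: subtract each regime's $\alpha^*$ from $\alpha^0=\theta+\mu$, check that both branches give the value $1$ and the $\theta$-derivative $\lambda$ at $\theta=1/\lambda$, verify that the partial derivatives in $\lambda$ and $\theta$ are positive in each regime, and note that $\mu$ cancels. Writing $\Delta=g(\lambda\theta)$ (the same function the paper calls $H$ in its externality appendix) is a minor streamlining that makes explicit how monotonicity carries across the regime boundary, a step the paper leaves implicit.
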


\begin{proof}
    Direct computation of $\Delta\equiv\alpha^0-\alpha^*$ using
    $\alpha^0=\theta+\mu$ and Proposition~\ref{prop:alpha_star}. Both
    expressions yield $1$ at $\theta=1/\lambda$ and their derivatives match
    at that boundary. See Appendix~\ref{app:proof_discount} for the complete
    derivation.
\end{proof}

From the firm's perspective, the security discount is the wedge between what
it would deploy if cyber risk were irrelevant and what it can justify once the
governance-capability gap is taken into account. As capability advances, the
firm's no-risk deployment benchmark rises, but so does the amount of deployment
withheld because the security burden also grows. The discount is therefore not
a fixed cost of adopting AI. It is an endogenous gap between the firm's no-risk
deployment benchmark and its economically sustainable deployment under the
current loss environment. Figure~\ref{fig:discount_app} in Appendix~\ref{app:comp_statics} visualizes
this security discount, and Table~\ref{tab:comp_statics} summarizes the
interior comparative statics for deployment, defense, breach probability, and
firm value.

\begin{remark}
    The security discount widens as capability advances. Stronger AI therefore
    raises the firm's no-risk deployment benchmark and, at the same time,
    increases the amount of deployment withheld under the gap. Investments that
    reduce $\lambda$ become more valuable as capability advances because they
    narrow this wedge.
\end{remark}

\begin{proposition}[Deployment Paradox]
    \label{prop:paradox}
    Suppose $\lambda<\mu+1$. Under capability--damage bundling,
    \begin{equation}
        \frac{\partial\alpha^*}{\partial\theta} =
        \begin{cases}
            1-\lambda, & \lambda\theta \le 1, \\[4pt]
            1-\sqrt{\lambda/\theta}, & \lambda\theta > 1.
        \end{cases}
        \label{eq:paradox}
    \end{equation}
    If $\lambda>1$: (i) optimal deployment is strictly decreasing in
    capability for all $\theta\in(0,\lambda)$; (ii) strictly increasing for
    $\theta>\lambda$; (iii) the global minimum $\alpha^*_{\min}=\mu+1-\lambda$
    is attained at $\theta=\lambda$. If $\lambda\le 1$:
    $\partial\alpha^*/\partial\theta\ge 0$ for all $\theta$, and no paradox
    arises.
\end{proposition}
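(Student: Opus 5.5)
The plan is to differentiate the piecewise closed form for $\alpha^*(\theta)$ from Proposition~\ref{prop:alpha_star}, then read off signs in each regime, treating the case $\lambda>1$ and the case $\lambda\le 1$ separately. Assumption~\ref{assump:interior} ($\lambda<\mu+1$) is used only to guarantee that the expressions in Proposition~\ref{prop:alpha_star} are the relevant interior optima throughout, so the derivative of the formula is the derivative of the optimal policy.

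\emph{Step 1: the derivative.} In the corner regime $\lambda\theta\le 1$, $\alpha^*=\mu+\theta(1-\lambda)$ is linear, so $\partial\alpha^*/\partial\theta=1-\lambda$. In the interior regime $\lambda\theta>1$, differentiating $\theta+\mu+1-2\sqrt{\lambda\theta}$ gives $1-\lambda/\sqrt{\lambda\theta}=1-\sqrt{\lambda/\theta}$. At the boundary $\theta=1/\lambda$, we have $\sqrt{\lambda/\theta}=\lambda$, so the one-sided derivatives agree. This matches the $C^1$ property of $\Delta$ in Proposition~\ref{prop:discount}, since $\alpha^*=\theta+\mu-\Delta$. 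Hence $\alpha^*$ is $C^1$ and the derivative is given by (\ref{eq:paradox}) everywhere.

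\emph{Step 2: the case $\lambda>1$.} Here $1/\lambda<1<\lambda$, so the interval $(0,\lambda)$ splits into $(0,1/\lambda]$ and $(1/\lambda,\lambda)$.
\begin{itemize}
\item On $(0,1/\lambda]$, the slope is $1-\lambda<0$.
\item On $(1/\lambda,\lambda)$, we have $\theta<\lambda$, so $\sqrt{\lambda/\theta}>1$ and the slope is negative.
\end{itemize}
Since $\alpha^*$ is continuous with a strictly negative derivative on $(0,\lambda)$, it is strictly decreasing there. This gives (i). For $\theta>\lambda$ we remain in the interior regime, and $\sqrt{\lambda/\theta}<1$, so the slope is positive; this gives (ii). Together, (i) and (ii) make $\theta=\lambda$ the unique global minimizer. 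Substituting into the interior formula gives $\lambda+\mu+1-2\lambda=\mu+1-\lambda$, which is positive by Assumption~\ref{assump:interior}; this gives (iii).

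\emph{Step 3: the case $\lambda\le1$.} In the corner regime the slope is $1-\lambda\ge0$. In the interior regime, $\theta>1/\lambda\ge\lambda$ because $\lambda^2\le1$, so $\sqrt{\lambda/\theta}<1$ and the slope is positive. Thus $\partial\alpha^*/\partial\theta\ge0$ everywhere and no paradox arises.

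The only delicate point is the regime bookkeeping: one must verify the ordering $1/\lambda<\lambda$ when $\lambda>1$, and $1/\lambda\ge\lambda$ when $\lambda\le1$, so that the sign change at $\theta=\lambda$ falls inside the interior regime. One must also check the $C^1$ matching at $\theta=1/\lambda$, so that monotonicity holds across the kink between regimes. Both are elementary once written down.
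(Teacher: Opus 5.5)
Your proposal is correct and follows the paper's proof essentially step for step. Like the paper, you differentiate the closed form of $\alpha^*$ in each regime, use $1/\lambda<\lambda$ (respectively $1/\lambda\ge\lambda$) to splice the corner and interior sign conditions into $(0,\lambda)$, and evaluate at $\theta=\lambda$; your explicit $C^1$ check at $\theta=1/\lambda$ simply makes precise the paper's remark that the two paradox intervals adjoin.
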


\begin{proof}
    Differentiate $\alpha^*(\theta)$ from Proposition~\ref{prop:alpha_star}
    with respect to $\theta$ in each regime. The sign of
    $\partial\alpha^*/\partial\theta$ determines the paradox region. Since
    $1/\lambda<\lambda$ when $\lambda>1$, the corner and interior paradox
    intervals adjoin, covering $(0,\lambda)$. See
    Appendix~\ref{app:proof_paradox} for the complete derivation.
\end{proof}

This sign reversal is not an artifact of the square-root form in the baseline
interior solution. Appendix~\ref{app:generalized_L} and
Appendix~\ref{app:generalized_prod} show that related sign-reversal regions
can arise under generalized damage and productivity functions whenever the
exposure-linked security burden grows sufficiently strongly relative to the
marginal productivity gain.

\begin{remark}
    The deployment paradox is a rational organizational response to the gap,
    not an irrationality or misperception. In the corner regime, each
    capability increment adds $1-\lambda<0$ to deployment when $\lambda>1$.
    In the interior regime, gross productivity grows linearly with $\theta$
    while the expected security burden scales with $\sqrt{\lambda\theta}$;
    the turning point at $\theta=\lambda$ marks where the linear productivity
    channel begins to dominate. The paradox resolves either when capability
    crosses $\theta=\lambda$, or when $\lambda$-reducing governance investment
    lowers $\lambda$ below 1 (Proposition~\ref{prop:complementarity}). The
    boundary $\theta=1/\lambda$ marks where defensive investment becomes
    worthwhile, not where the paradox emerges or resolves.
\end{remark}

The mechanism also has a straightforward cross-industry interpretation. Firms
facing the same capability improvement can respond differently depending on
their loss environment. In a high-loss environment, a more capable system is
deployed with broader authority exposure under the gap, so the damage
amplification from capability growth can outpace the productivity gain over an
intermediate range. The firm then responds by reducing its deployment footprint.
In a milder loss environment, by contrast, the same capability improvement
raises deployment because the productivity channel continues to dominate. The
paradox is therefore not that the AI system is undesirable in absolute terms;
it is that the firm's current controls may be unable to safely contain what the
more capable system can reach.

\begin{proposition}[Effect of Governance Maturity on Deployment]
    \label{prop:complementarity}
    A reduction in $\lambda$, reflecting governance investment that closes the
    gap, increases optimal deployment:
    \begin{equation}
        \frac{\partial\alpha^*}{\partial\lambda} =
        \begin{cases}
            -\theta, & \lambda\theta \le 1, \\[4pt]
            -\sqrt{\theta/\lambda}, & \lambda\theta > 1.
        \end{cases}
    \end{equation}
    Both are strictly negative; both yield $-1/\lambda$ at $\theta=1/\lambda$.
\end{proposition}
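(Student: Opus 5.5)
The plan is to differentiate the closed form for $\alpha^*(\theta)$ from Proposition~\ref{prop:alpha_star} with respect to $\lambda$, regime by regime, holding $\theta$ and $\mu$ fixed. Assumption~\ref{assump:interior} keeps $\alpha^*>0$, so both regime formulas are valid optimal deployments and no corner at $\alpha=0$ interferes. The one subtlety is that the regime boundary $\lambda\theta=1$ moves with $\lambda$. For fixed $\theta$, I will therefore treat the corner regime as $\lambda\le 1/\theta$ and the interior regime as $\lambda>1/\theta$.

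In the corner regime, $\alpha^*=\mu+\theta(1-\lambda)$ is affine in $\lambda$, so $\partial\alpha^*/\partial\lambda=-\theta<0$ since $\theta>0$. In the interior regime, $\alpha^*=\theta+\mu+1-2\sqrt{\lambda\theta}$, and
\[
\frac{\partial\alpha^*}{\partial\lambda}=-2\cdot\frac{\theta}{2\sqrt{\lambda\theta}}=-\sqrt{\theta/\lambda}<0 .
\]
At the boundary $\lambda=1/\theta$, the corner expression gives $-\theta=-1/\lambda$. The interior expression gives $-\sqrt{\theta\cdot\theta}=-\theta=-1/\lambda$. The one-sided derivatives therefore agree and $\alpha^*$ is $C^1$ in $\lambda$ at the switch.

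To make the economic statement, that a reduction in $\lambda$ raises deployment, rigorous across regimes, I will note three facts. First, $\alpha^*$ is continuous in $\lambda$, because the two formulas coincide at $\lambda\theta=1$ by Proposition~\ref{prop:alpha_star}. Second, $\alpha^*$ is strictly decreasing on each closed piece. Third, continuity plus piecewise strict monotonicity gives global strict monotonicity on the admissible range $\lambda\in(0,\mu+1)$. A cross-check via the envelope theorem is also available. By Lemma~\ref{lem:reduced}, $\alpha^*$ equals the coefficient on $\alpha$ in the reduced profit, so $\partial\alpha^*/\partial\lambda$ equals the derivative of that marginal-productivity-net-of-security term. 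This confirms that the effect operates through the security burden alone.

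The only step I expect to need care is the boundary. I must make sure the derivative is evaluated with the correct moving regime condition, and that the $C^1$ claim uses matching one-sided limits rather than simply plugging into either formula. Everything else is direct calculus.
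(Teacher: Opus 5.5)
Your proposal is correct and follows the paper's own argument. Like the paper, you differentiate the closed-form $\alpha^*$ from Proposition~\ref{prop:alpha_star} in each regime to get $-\theta$ and $-\sqrt{\theta/\lambda}$, then check that both equal $-1/\lambda$ at the boundary $\lambda\theta=1$. Your added step, continuity plus piecewise strict monotonicity, yields global strict decrease in $\lambda$ across the moving regime boundary, which the paper leaves implicit. The ``envelope'' cross-check, by contrast, only repeats the same differentiation of the linear coefficient and is not an independent confirmation.
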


\begin{proof}
    Direct differentiation of \eqref{eq:alpha_star} with respect to $\lambda$.
    See Appendix~\ref{app:proof_complementarity} for the complete derivation.
\end{proof}

This comparative static has a direct managerial interpretation. A reduction in
$\lambda$ does not simply lower expected breach losses holding deployment fixed.
It also relaxes the security burden that constrains how broadly the firm can
use capable AI in the first place. Governance investments that improve
containment, segmentation, and privilege design can therefore operate as
enabling investments rather than as separate compliance costs. By reducing the
loss environment the firm faces, they convert more of AI's technical capability
into economically viable deployment. Figure~\ref{fig:complement} in Appendix~\ref{app:comp_statics} illustrates
this comparative static by plotting optimal deployment as the loss environment
changes.

\begin{remark}
    Reducing $\lambda$ below 1 eliminates the paradox region entirely:
    $\partial\alpha^*/\partial\theta\ge 0$ for all $\theta$ once $\lambda\le
    1$. Closing the gap therefore does more than increase deployment at the
    margin. It restores the standard monotone relationship between capability
    and deployment by moving the firm out of the paradox regime altogether.
\end{remark}

\subsection{Attack-Surface Channel Under Severe Governance Failure}
\label{subsec:beta}

The baseline $\beta=1$ specification is intentionally conservative. In this
case, the attack-surface channel cancels in equilibrium because the firm can
offset a broader attack surface by scaling defense proportionally with
deployment. The paradox therefore operates through the damage channel: more
capable AI raises conditional breach severity under the gap. This subsection
relaxes that proportional-offset property. Formally, consider
$p(\alpha,d)=\alpha^\beta/(\alpha^\beta+d)$ with $\beta>1$. Under this
specification, $p^*_\beta=\alpha^{(\beta-1)/2}/\sqrt{\lambda\theta}$ increases
in $\alpha$, so optimal defense can no longer fully offset superlinear
attack-surface growth. The attack-surface channel then begins to reinforce the
baseline damage-channel mechanism. A local sign-reversal condition via the
implicit function theorem is:
\[
    \frac{(\beta+1)\alpha^{(\beta-1)/2}\sqrt{\lambda}}{2\sqrt{\theta}}>1.
\]
This is a local result under interior feasibility, not a global theorem.
Appendix~\ref{app:generalized_p} provides the full derivation via the implicit
function theorem.

\paragraph{Asymmetric attacker advantage.}
The $\beta>1$ specification also has a natural interpretation in terms of
attacker--defender asymmetry. In practice, adversaries can probe many entry
points simultaneously while the defender must allocate resources across all of
them; a single successful intrusion suffices for the attacker, while the
defender must protect every node. This structural asymmetry means that the
effective attack-surface elasticity the defender faces may exceed one---each
additional AI-enabled entry point is harder to defend proportionally than the
last, because the attacker can concentrate effort while the defender cannot.
Under $\beta>1$, this asymmetry is partially captured: the condition
$(\beta+1)\alpha^{(\beta-1)/2}\sqrt{\lambda}/(2\sqrt{\theta})>1$ becomes
easier to satisfy as $\alpha$ grows, reflecting the compounding difficulty of
defending an expanding surface against a concentrated adversary.

The baseline $\beta=1$ should therefore be read as a conservative benchmark
for isolating the damage-channel mechanism; $\beta>1$ can additionally
strengthen sign reversal locally under interior feasibility.

\subsection{Endogenous Capability Choice: When Firms Rationally Reject
Frontier AI}
\label{subsec:endogenous_theta}

The main model treats $\theta$ as exogenous. However, the gap also shapes
capability selection: under gap conditions, deploying a more capable system
widens the damage channel, while retaining a legacy system limits exposure.
The same high-gap environments that generate the deployment paradox can also
distort capability choice itself. A firm may rationally reject a frontier
upgrade, and that rejection can persist even after the local paradox in
deployment has ended because firm value has not yet recovered.

Suppose the firm faces a discrete choice between legacy $\theta_L$ and frontier
$\theta_F>\theta_L$. Let $\theta_c \in \{\theta_L, \theta_F\}$ denote the
firm's choice of capability level. Substituting $\alpha^*(\theta_c)$ into the
reduced-form profit of Lemma~\ref{lem:reduced} yields
$\pi^*(\theta_c)=[\alpha^*(\theta_c)]^2/2$; we therefore define firm value as
$V(\theta_c)\equiv[\alpha^*(\theta_c)]^2/2$.

\begin{proposition}[Endogenous Capability Choice]
    \label{prop:endogenous}
    Under Assumption~\ref{assump:interior}:
    \begin{enumerate}
        \item \textbf{If $\lambda\le 1$:} $V(\theta_c)$ is weakly increasing
        for all $\theta_c>0$. The firm weakly prefers the higher-capability
        endpoint $\theta_F$.
        \item \textbf{If $\lambda>1$:} $V(\theta_c)$ is U-shaped with global
        minimum at $\theta_c=\lambda$. A firm choosing from
        $[\theta_L,\theta_F]$ always selects an endpoint:
        \begin{enumerate}
            \item \textbf{Both in the paradox region} ($\theta_F\le\lambda$):
            $V$ strictly decreasing; firm rejects the frontier upgrade.
            \item \textbf{Frontier beyond the turning point}
            ($\theta_F>\lambda$): the firm upgrades if and only if
            $V(\theta_F)>V(\theta_L)$. When both systems lie in the interior
            regime ($\lambda\theta_L>1$), this reduces to
            \[
                \sqrt{\theta_F}+\sqrt{\theta_L}>2\sqrt{\lambda},
            \]
            or, equivalently given $\theta_F>\theta_L$,
            \[
                \theta_F>
                \left[\max\{0,\,2\sqrt{\lambda}-\sqrt{\theta_L}\}\right]^2.
            \]
            When the legacy system lies in the corner regime
            ($\lambda\theta_L\le 1$), $V(\theta_L)=
            [\mu+\theta_L(1-\lambda)]^2/2$ and the upgrade condition
            $V(\theta_F)>V(\theta_L)$ must be evaluated directly.
        \end{enumerate}
    \end{enumerate}
\end{proposition}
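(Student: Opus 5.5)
The argument reduces everything to the behavior of $\alpha^*(\theta)$, which is already fully characterized. Substituting the maximizer $\alpha=\alpha^*(\theta_c)$ of the concave quadratic in Lemma~\ref{lem:reduced} into that reduced-form profit gives $\pi^*=[\alpha^*]^2/2$ in each regime, since a function $b\alpha-\alpha^2/2$ evaluated at $\alpha=b$ equals $b^2/2$. Thus $V(\theta_c)=[\alpha^*(\theta_c)]^2/2$. Because Assumption~\ref{assump:interior} and Proposition~\ref{prop:alpha_star} give $\alpha^*>0$ for all $\theta>0$, the map $x\mapsto x^2/2$ is strictly increasing on the relevant range. Hence $V$ inherits exactly the monotonicity of $\alpha^*$. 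Equivalently, $V'(\theta_c)=\alpha^*\,\partial\alpha^*/\partial\theta$ has the sign of $\partial\alpha^*/\partial\theta$.

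For part 1, I invoke Proposition~\ref{prop:paradox}: when $\lambda\le1$ we have $\partial\alpha^*/\partial\theta\ge0$ everywhere, so $V$ is weakly increasing and $V(\theta_F)\ge V(\theta_L)$. For part 2, the same proposition gives strict decrease of $\alpha^*$ on $(0,\lambda)$ and strict increase on $(\lambda,\infty)$. The function $\alpha^*$ is continuous, indeed $C^1$ at $\theta=1/\lambda$ by Proposition~\ref{prop:discount}, so $V$ is U-shaped with its minimum at $\lambda$. A continuous function that is strictly decreasing and then strictly increasing (quasi-convex) attains its maximum over an interval $[\theta_L,\theta_F]$ at an endpoint, which yields the endpoint-selection claim. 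Case (a) follows immediately: if $\theta_F\le\lambda$, then $V$ is strictly decreasing on $[\theta_L,\theta_F]$, so $V(\theta_L)>V(\theta_F)$ and the firm rejects the upgrade. In case (b), the upgrade decision is by definition the comparison $V(\theta_F)>V(\theta_L)$.

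The only real computation is the interior–interior reduction, which I expect to be the main step. With $\lambda\theta_L>1$, and hence $\lambda\theta_F>1$, write $\alpha^*(\theta)=(\sqrt\theta-\sqrt\lambda)^2+\mu+1-\lambda$. Since both $\alpha^*$ values are positive, the condition $V(\theta_F)>V(\theta_L)$ is equivalent to $\alpha^*(\theta_F)>\alpha^*(\theta_L)$, that is,
\[
(\sqrt{\theta_F}-\sqrt\lambda)^2-(\sqrt{\theta_L}-\sqrt\lambda)^2=(\sqrt{\theta_F}-\sqrt{\theta_L})(\sqrt{\theta_F}+\sqrt{\theta_L}-2\sqrt\lambda)>0.
\]
The first factor is positive because $\theta_F>\theta_L$, so the condition reduces to $\sqrt{\theta_F}+\sqrt{\theta_L}>2\sqrt\lambda$. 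Rearranging gives $\sqrt{\theta_F}>2\sqrt\lambda-\sqrt{\theta_L}$. When the right-hand side is negative the inequality holds automatically, which accounts for the $\max\{0,\cdot\}$, and squaring gives the stated threshold. When $\lambda\theta_L\le1$, the corner formula $\alpha^*(\theta_L)=\mu+\theta_L(1-\lambda)$ from Proposition~\ref{prop:alpha_star} gives the stated $V(\theta_L)$. In that case no closed form is claimed, so nothing further needs to be proved.
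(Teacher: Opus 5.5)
Your proof is correct and follows essentially the same route as the paper: write $V=[\alpha^*]^2/2$, inherit the sign of $dV/d\theta_c$ from Proposition~\ref{prop:paradox} via $\alpha^*>0$, use the U-shape to force an endpoint choice, and factor the interior--interior comparison. The only cosmetic difference is that you complete the square, $\alpha^*(\theta)=(\sqrt{\theta}-\sqrt{\lambda})^2+\mu+1-\lambda$, and use a difference of squares. The paper instead rearranges to $\theta_F-\theta_L>2\sqrt{\lambda}\,(\sqrt{\theta_F}-\sqrt{\theta_L})$ and divides by $\sqrt{\theta_F}-\sqrt{\theta_L}$.
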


\begin{proof}
    We derive $V(\theta_c) = [\alpha^*(\theta_c)]^2/2$ from
    Lemma~\ref{lem:reduced}: in each regime, profit is
    $C(\theta_c)\alpha - \alpha^2/2$ at the optimum $\alpha^* = C(\theta_c)$,
    giving $\pi^* = [C(\theta_c)]^2/2$. For $\lambda \le 1$,
    Proposition~\ref{prop:paradox} gives $\partial\alpha^*/\partial\theta \ge
    0$, so $V$ is weakly increasing. For $\lambda > 1$, computing
    $dV/d\theta_c$ in each regime shows that $V$ is strictly decreasing on
    $(0,\lambda)$ and strictly increasing on $(\lambda,\infty)$, with $C^1$
    continuity at $\theta_c=1/\lambda$ (Appendix~\ref{app:global}). The
    U-shape forces a boundary solution on $[\theta_L,\theta_F]$. When both
    systems lie in the interior regime, the upgrade condition reduces to
    $\sqrt{\theta_F}+\sqrt{\theta_L}>2\sqrt{\lambda}$, or equivalently, given
    $\theta_F>\theta_L$,
    $\theta_F>\left[\max\{0,\,2\sqrt{\lambda}-\sqrt{\theta_L}\}\right]^2$.
    See Appendix~\ref{app:proof_endogenous} for the complete derivation,
    including the feasible-set threshold.
\end{proof}

Because $V(\theta_c)$ is U-shaped with a unique interior minimum at
$\theta_c = \lambda$, no interior optimum exists; the firm's capability choice
necessarily collapses to a boundary solution, making the discrete
legacy-versus-frontier framing the mathematically exact representation of the
organizational choice. Since $V(\theta_c)=[\alpha^*(\theta_c)]^2/2$, the
deployment paradox carries directly into capability selection: within the gap
region, a frontier upgrade simultaneously reduces both optimal deployment and
firm value.

This result changes the interpretation of frontier AI adoption. The firm's
choice is no longer simply whether a more capable system is technologically
superior. It is whether the firm's current governance environment can support
that capability without eroding value. Under weak governance, the frontier
system may expand what the technology can do while still reducing the level at
which the firm can profitably deploy it. Capability choice therefore becomes
inseparable from the firm's current governance environment.

We define the \textit{sandboxing trap} as the situation in which a firm
rationally retains a legacy AI system rather than adopting a frontier upgrade
because the governance-capability gap makes the upgrade lower firm value than
the legacy option.

\begin{remark}
    The sandboxing trap arises because the firm's value function remains below
    the legacy benchmark even after the frontier system has moved beyond the
    local paradox boundary. Once $\theta_F>\lambda$, an additional capability
    increment again raises deployment at the margin, but the cumulative damage
    incurred over the paradox region can still leave total firm value below
    that of the legacy system. The resulting attachment to legacy capability
    is therefore structural rather than behavioral: it reflects the firm's
    current governance environment, not switching costs or
    inertia.\footnote{Note that this result characterizes a single firm's
    optimal capability choice absent competitive pressure. In markets where
    rivals close the gap and adopt frontier capability, competitive dynamics
    may induce firms to adopt frontier systems even when $V(\theta_F) <
    V(\theta_L)$ in the single-firm benchmark. Such dynamics could weaken
    rational rejection and increase aggregate cyber exposure, a possibility
    we discuss in Section~\ref{sec:discussion}.}
\end{remark}

\subsection{Social Welfare, Externalities, and the Expanded Paradox}

The preceding analysis takes the firm's perspective. AI breaches, however,
rarely stay inside the firm. A compromised AI agent with cross-system authority
can propagate damage to customers, partners, and critical
infrastructure---consequences the firm does not fully bear. This matters because
the paradox depends on the loss environment the firm takes into account when
choosing deployment. Once part of that loss falls on outsiders, the firm's
private assessment understates the social burden of deploying capable AI.

We capture these breach externalities with $e\ge 0$, so that total social
damage from a breach is $(1+e)\cdot p\cdot L$.

\medskip
\noindent\textbf{First-Best (Joint Control).}
To characterize the social benchmark, consider a social planner that jointly
chooses deployment and defense to maximize $W=\pi-e\cdot p\cdot L$. Note that
$\pi$ already accounts for the private loss $p\cdot L$, so subtracting
$e\cdot p\cdot L$ yields total social damage $(1+e)\cdot p\cdot L$---there is
no double-counting. This is equivalent to the firm's problem with $\lambda$
replaced by $(1+e)\lambda$.

\begin{proposition}[First-Best Social Optimum]
    \label{prop:social_A}
    \begin{equation}
        \alpha^{**}_{FB}(\theta) = \max\!\left\{0,\;
        \begin{cases}
            \mu+\theta(1-(1+e)\lambda), & (1+e)\lambda\theta\le 1, \\[4pt]
            \theta+\mu+1-2\sqrt{(1+e)\lambda\theta}, & (1+e)\lambda\theta>1.
        \end{cases}
        \right\}.
    \end{equation}
\end{proposition}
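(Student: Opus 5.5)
The plan is to reduce the planner's problem to the firm's problem with an inflated loss parameter and then reuse Lemmas~\ref{lem:dstar}--\ref{lem:reduced} and Proposition~\ref{prop:alpha_star}. Write $\tilde\lambda\equiv(1+e)\lambda$. Since $W=\pi-e\,p\,L$ and $\pi$ already contains $-pL$, we have
\[
W(\alpha,d)=(\theta+\mu)\alpha-\frac{\alpha^2}{2}-\frac{\alpha}{\alpha+d}\,\tilde\lambda\alpha\theta-d .
\]
This is exactly the objective in \eqref{eq:profit} with $\lambda$ replaced by $\tilde\lambda$. None of the earlier lemmas uses any property of $\lambda$ other than $\lambda>0$, so each of them applies verbatim with $\tilde\lambda$ in place of $\lambda$.

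\textbf{Step 1 (inner problem in $d$).} For fixed $\alpha>0$, Lemma~\ref{lem:dstar} with $\tilde\lambda$ gives $d^{**}=0$ if $\tilde\lambda\theta\le1$ and $d^{**}=\alpha(\sqrt{\tilde\lambda\theta}-1)$ otherwise. Strict concavity of $W$ in $d$ makes this the unique maximizer.

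\textbf{Step 2 (reduced form).} Lemma~\ref{lem:reduced} then gives the reduced objective $W(\alpha)=C(\theta)\alpha-\alpha^2/2$ on $\alpha>0$, where
\[
C(\theta)=\mu+\theta(1-\tilde\lambda)\ \text{ if } \tilde\lambda\theta\le1,\qquad C(\theta)=\theta+\mu+1-2\sqrt{\tilde\lambda\theta}\ \text{ if } \tilde\lambda\theta>1 .
\]

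\textbf{Step 3 (the $\max\{0,\cdot\}$).} This is the only genuinely new point. Assumption~\ref{assump:interior} guarantees $\lambda<\mu+1$, but it need not guarantee $\tilde\lambda<\mu+1$, so $C(\theta)$ can be negative. The argument therefore cannot invoke interior positivity from Proposition~\ref{prop:alpha_star}; it has to optimize over $\alpha\ge0$ directly.

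First, handle $\alpha=0$. By the convention $p(0,d)=0$, the planner sets $d=0$ and obtains $W=0$. At $\alpha\to0^+$, the reduced objective also tends to $0$, so the reduced form extends continuously to $\alpha=0$. Concavity in $\alpha$ then gives:
\begin{itemize}
\item if $C(\theta)>0$, the maximizer over $\alpha\ge0$ is $\alpha=C(\theta)$;
\item if $C(\theta)\le0$, the objective is nonincreasing on $[0,\infty)$, so the maximizer is $\alpha=0$.
\end{itemize}
Together these give $\alpha^{**}_{FB}=\max\{0,C(\theta)\}$.

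\textbf{Step 4 (joint optimality).} I would also note briefly why the two-stage approach delivers the joint optimum. For each $\alpha$, the inner maximum over $d$ is attained. The planner's value is therefore $\sup_\alpha \max_d W$, which is exactly what Steps 1--3 compute.

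I expect Step 3 to be the main obstacle. It requires checking the corner $\alpha=0$ carefully, given the convention for $p$ and the discontinuity of $\alpha/(\alpha+d)$ at $(0,0)$. It also has to make clear that positivity is no longer assumed.
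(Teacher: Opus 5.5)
Your proposal is correct and follows the paper's proof. Like the paper, you rewrite $W=\pi-e\,p\,L$ as the private objective with $\lambda$ replaced by $(1+e)\lambda$ and reuse Lemmas~\ref{lem:dstar}--\ref{lem:reduced} and Proposition~\ref{prop:alpha_star}. Your Step~3 adds care the paper omits: it optimizes directly over $\alpha\ge 0$ instead of invoking the positivity part of Proposition~\ref{prop:alpha_star}, which rests on Assumption~\ref{assump:interior} and need not hold for $(1+e)\lambda$, and so justifies the $\max\{0,\cdot\}$ that the paper only asserts.
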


\begin{proof}
    The social planner's problem is equivalent to the firm's problem with
    $\lambda$ replaced by $(1+e)\lambda$ throughout. Applying
    Proposition~\ref{prop:alpha_star} with this substitution yields the stated
    expression. See Appendix~\ref{app:proof_social_A} for the complete
    derivation.
\end{proof}

This result sharpens the interpretation of the paradox. The private firm can
already deploy less than the no-risk benchmark in high-gap environments. Once
externalities are added, however, the same firm may still deploy more than is
socially optimal because it does not bear the full downside of a breach. The
relevant comparison therefore changes: the issue is no longer only how far
deployment falls short of the no-risk benchmark, but also how far private
deployment remains above the social optimum.

\medskip
\noindent\textbf{Second-Best (Deployment Regulation Only).}
To characterize the constrained social benchmark, consider a regulator that
sets deployment while the firm continues to choose $d$ privately.

\begin{proposition}[Second-Best Social Optimum]
    \label{prop:social_B}
    \begin{equation}
        \alpha^{**}_{SB}(\theta) =
        \max\!\left\{0,\;
        \begin{cases}
            \mu+\theta(1-(1+e)\lambda), & \lambda\theta\le 1, \\[4pt]
            \theta+\mu+1-(2+e)\sqrt{\lambda\theta}, & \lambda\theta>1.
        \end{cases}
        \right\}.
        \label{eq:social_B}
    \end{equation}
\end{proposition}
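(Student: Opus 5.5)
The plan is a two-stage (Stackelberg) computation. The regulator fixes $\alpha$. The firm then chooses $d$ to maximize its private profit $\pi(\alpha,d)$, and Lemma~\ref{lem:dstar} already solves this inner problem. Because the firm ignores the externality, its best response is governed by the \emph{private} loss parameter $\lambda$, not $(1+e)\lambda$. So $d^*(\alpha,\theta)=0$ when $\lambda\theta\le 1$ and $d^*=\alpha(\sqrt{\lambda\theta}-1)$ otherwise, and by Lemma~\ref{lem:pstar} the induced breach probability is $p^*=1$ or $p^*=1/\sqrt{\lambda\theta}$, independent of $\alpha$. This is why the regime boundary in \eqref{eq:social_B} is $\lambda\theta=1$ rather than $(1+e)\lambda\theta=1$ as in Proposition~\ref{prop:social_A}.

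Next I substitute the firm's response into the regulator's objective
\[
W(\alpha)=(\theta+\mu)\alpha-\tfrac{\alpha^2}{2}-(1+e)\,p^*\,\lambda\alpha^2\theta/\alpha\cdot\alpha^{0}-d^*,
\]
that is, productivity minus $(1+e)p^*L$ minus $d^*$. I simplify this regime by regime.

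In the corner regime, $p^*L=\lambda\alpha\theta$ and $d^*=0$. Hence $W=[\mu+\theta(1-(1+e)\lambda)]\alpha-\alpha^2/2$.

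In the interior regime, $p^*L=\alpha\theta\lambda/\sqrt{\lambda\theta}=\alpha\sqrt{\lambda\theta}$ and $d^*=\alpha\sqrt{\lambda\theta}-\alpha$. Hence
\[
W=[\theta+\mu+1-(2+e)\sqrt{\lambda\theta}]\alpha-\alpha^2/2 .
\]
The coefficient comes from $(1+e)\sqrt{\lambda\theta}+\sqrt{\lambda\theta}-1$.

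In each regime $W$ is a concave quadratic in $\alpha$ with $W''=-1$. I will confirm that the regime itself does not depend on $\alpha$, since $d^*$ is linear in $\alpha$ and the switching condition involves only $\lambda\theta$. The unconstrained maximizer is therefore the linear coefficient. Imposing $\alpha\ge 0$ gives the $\max\{0,\cdot\}$: when the coefficient is nonpositive, $W$ is decreasing on $[0,\infty)$, so the constrained optimum is $\alpha=0$. Note that Assumption~\ref{assump:interior} no longer guarantees positivity, because the effective burden is larger.

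The main obstacle is being careful about the rule for the firm's response. The regulator cannot choose $d$, so there is no envelope argument for $d$ in $W$. Unlike the first-best case, the social objective is not simply the private problem with $\lambda$ rescaled. One must substitute the private best response and keep the two burden terms separate: the externality scales only the residual damage, giving the extra factor $e\sqrt{\lambda\theta}$, while the defense bill stays private. I would also check that at $\alpha=0$ the convention $p(0,d)=0$ leaves $W(0)=0$ consistent with both branches. Finally, I would check continuity at $\lambda\theta=1$: both coefficients equal $\mu+\theta-(1+e)$ there, which serves as a sanity check on the algebra.
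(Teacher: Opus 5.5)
Your proposal is correct and follows the paper's proof. You substitute the firm's privately optimal $d^*(\alpha)$ from Lemma~\ref{lem:dstar}, which keeps the threshold at $\lambda\theta=1$, into $W_{SB}=\pi-e\,p\,L$; you obtain a concave quadratic in each regime with linear coefficients $\mu+\theta(1-(1+e)\lambda)$ and $\theta+\mu+1-(2+e)\sqrt{\lambda\theta}$; and you impose $\alpha\ge 0$. The only blemish is your displayed formula for $W(\alpha)$, where the trailing factor $\cdot\alpha^{0}$ clashes with the paper's notation $\alpha^0=\theta+\mu$; your verbal restatement (productivity minus $(1+e)p^*L$ minus $d^*$) is the correct objective and is what your regime computations actually use.
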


\begin{proof}
    The regulator anticipates $d^*(\alpha)$ from Lemma~\ref{lem:dstar} and
    substitutes into $W_{SB} \equiv \pi(\alpha,d^*) - e\cdot
    p(\alpha,d^*)\cdot L(\alpha,\theta)$. In each regime, $W_{SB}$ reduces
    to a concave quadratic in $\alpha$ (SOC $= -1 < 0$), and the FOC yields
    the stated expression. The regime threshold $\lambda\theta = 1$ is
    invariant to the regulator's choice of $\alpha$: the regulator scales the
    level of $d^*$ but cannot shift which regime the firm occupies. See 
    Appendix~\ref{app:proof_social_B} for the complete derivation.
\end{proof}

This second-best result arises because the regulator can restrict deployment
but cannot directly control the firm's defensive response. As a result, the
social adjustment must rely more heavily on deployment reduction.

\begin{proposition}[Externalities and the Social Paradox Region]
\label{prop:externality_expansion}
Let $e>0$ and suppose the private positive-deployment condition
$\lambda<\mu+1$ holds. Define $x\equiv\lambda\theta$ and $E\equiv 1+e$. Then:
\begin{enumerate}
    \item Social deployment is weakly below private deployment:
    \[
        \alpha^{**}_{SB}(\theta)\le
        \alpha^{**}_{FB}(\theta)\le
        \alpha^*(\theta).
    \]
    \item The first-best paradox region is weakly wider than the private
    paradox region. In the corner regime, the condition is
    $(1+e)\lambda>1$. In the interior regime, it is $\theta<(1+e)\lambda$.
    \item The second-best paradox region is weakly wider than the private
    paradox region. In the interior regime, it is
    $\theta<\left(\frac{2+e}{2}\right)^2\lambda$.
    \item There exist parameter values for which a firm is outside the private
    paradox region but inside the social paradox region.
\end{enumerate}
\end{proposition}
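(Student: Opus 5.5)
The plan is to work case by case in $x\equiv\lambda\theta$, using the closed forms in Propositions~\ref{prop:alpha_star}, \ref{prop:social_A} and \ref{prop:social_B}. Write $g(x)=\mu/\theta\cdot 0+\ldots$ only implicitly: it is cleaner to define the \emph{private burden} $B(\lambda)$ so that $\alpha^*=\theta+\mu-\Delta_\lambda(\theta)$, where $\Delta_\lambda$ is the security discount of Proposition~\ref{prop:discount}.

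\textbf{Part 1.} Since $\alpha^{**}_{FB}$ is $\alpha^*$ with $\lambda$ replaced by $E\lambda$, it equals $\max\{0,\theta+\mu-\Delta_{E\lambda}(\theta)\}$. Proposition~\ref{prop:discount} says $\Delta$ is increasing in $\lambda$, so $\alpha^{**}_{FB}\le\alpha^*$. Here we use $\alpha^*>0$ under Assumption~\ref{assump:interior}, so the $\max\{0,\cdot\}$ is harmless.

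For $\alpha^{**}_{SB}\le\alpha^{**}_{FB}$, I would argue by optimization rather than by formula. The first-best planner maximizes $W$ over $(\alpha,d)$. The second-best regulator maximizes the same $W$ over the smaller set $\{(\alpha,d^*(\alpha))\}$. That comparison alone gives welfare ordering, not deployment ordering, so a direct comparison of the formulas is still needed. Write $\alpha^{**}_{SB}=\theta+\mu-S(x)$ with $S(x)=Ex$ for $x\le1$ and $S(x)=(2+e)\sqrt{x}-1$ for $x>1$. Write $\alpha^{**}_{FB}=\theta+\mu-F(x)$ with $F(x)=Ex$ for $Ex\le1$ and $F(x)=2\sqrt{Ex}-1$ for $Ex>1$. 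I must then show $S\ge F$ on three ranges.
\begin{itemize}
\item For $Ex\le1$ the two coincide.
\item For $x\le 1<Ex$, we need $Ex\ge 2\sqrt{Ex}-1$, which is $(\sqrt{Ex}-1)^2\ge0$.
\item For $x>1$, we need $(2+e)\sqrt{x}\ge 2\sqrt{E}\sqrt{x}$, i.e.\ $2+e\ge 2\sqrt{1+e}$, which is AM--GM: $(1+e)+1\ge2\sqrt{1+e}$.
\end{itemize}
Monotonicity of $\max\{0,\cdot\}$ then finishes Part 1.

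\textbf{Parts 2--3.} Differentiate each regime in $\theta$.
\begin{itemize}
\item First-best corner: the slope is $1-E\lambda$, negative iff $E\lambda>1$. This is weaker than $\lambda>1$.
\item First-best interior: the slope is $1-\sqrt{E\lambda/\theta}<0$ iff $\theta<E\lambda$, which contains $(0,\lambda)$.
\item Second-best interior: the slope is $1-\frac{2+e}{2}\sqrt{\lambda/\theta}$, negative iff $\theta<\left(\frac{2+e}{2}\right)^2\lambda$. This threshold exceeds $\lambda$, and by the AM--GM step it also exceeds $E\lambda$.
\item Second-best corner: the slope is $1-E\lambda$, as in the first-best corner.
\end{itemize}
"Weakly wider" follows because each threshold set contains the private one $\{\theta<\lambda\}$ from Proposition~\ref{prop:paradox}.

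One caveat concerns the region where the $\max$ truncates at zero. There deployment is constant, so I would define the paradox region as the set where the unconstrained expression is strictly decreasing. I would state that convention explicitly; this bookkeeping is the main obstacle.

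\textbf{Part 4.} Construct an explicit example. Take $\lambda\le1$, so there is no private paradox, with $E\lambda>1$ and $\theta$ small enough to lie in the social corner regime while the social deployment stays positive. For instance, $\lambda=0.8$, $e=1$, $\mu=1$, $\theta=0.5$:
\begin{itemize}
\item $E\lambda\theta=0.8\le1$, so the first-best is in the corner regime.
\item The first-best slope is $1-1.6<0$.
\item First-best deployment is $1+0.5(1-1.6)=0.7>0$.
\end{itemize}
The second-best corner regime gives the same slope, so the firm is in both social paradox regions but not in the private one.
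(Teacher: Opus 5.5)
Your proposal is correct and follows the paper's proof essentially step for step. You use the same discount functions $H(Ex)$ and $S_e(x)$, the same three-case comparison closed by $(\sqrt{Ex}-1)^2\ge 0$ and $(\sqrt{E}-1)^2\ge 0$, and the same regime-by-regime differentiation for Parts 2--3. The differences are minor: your Part 4 witness ($\lambda=0.8\le 1<E\lambda$ in the corner regime, with positivity of social deployment checked) replaces the paper's sketch of choosing $\theta\in(\lambda,(1+e)\lambda)$, which implicitly also needs $(1+e)\lambda>1$, and you make explicit the $\max\{0,\cdot\}$ truncation convention that the paper leaves implicit.
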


\begin{table}[htbp]
\centering
\caption{Paradox Conditions: Private vs.\ Social}
\label{tab:paradox_conditions}
\begin{tabular}{lcc}
\toprule
 & \textbf{Corner condition} & \textbf{Interior condition} \\
\midrule
Private       & $\lambda>1$       & $\theta<\lambda$ \\
First-Best    & $(1+e)\lambda>1$  & $\theta<(1+e)\lambda$ \\
Second-Best   & $(1+e)\lambda>1$  & $\theta<\left(\frac{2+e}{2}\right)^2\lambda$ \\
\bottomrule
\end{tabular}
\\\smallskip
\begin{minipage}{0.85\textwidth}
\small\textit{Note.} For $e>0$, social paradox regions weakly expand relative
to the private region. Each condition is interpreted within its corresponding
regime and subject to regime feasibility.
\end{minipage}
\end{table}

\begin{proof}
    See Appendix~\ref{app:proof_externality}.
\end{proof}

This result extends the paradox from a private deployment problem to a social
governance problem. Once external breach losses are taken into account, the
social planner assigns greater weight to the security burden of capable AI than
the private firm does. The relevant policy issue is therefore not simply
whether deployment should be restricted. It is whether policy can reduce the
loss environment that makes deployment socially costly in the first place. Deployment limits arise as a second-best response when the regulator cannot
directly improve or induce the firm's defensive and containment choices. Figure~\ref{fig:social_app} in Appendix~\ref{app:comp_statics} illustrates
how breach externalities widen the paradox region relative to the private
benchmark.

\section{Numerical Illustration}
\label{sec:numerical}

The numerical exercises illustrate how the model's comparative statics operate
under ordinal variation in breach-loss magnitude. They do not estimate
industry-specific AI deployment levels. The exercise has three purposes.
First, it maps industry breach-cost differences into relative values of
$\lambda$. Second, it distinguishes industries that admit a paradox region
from industries that are inside that region at a given capability level. Third,
it connects the continuous deployment result to the discrete capability-upgrade
decision in Proposition~\ref{prop:endogenous}. Direct measurement of
capability--authority co-variation is not currently feasible at scale: no
existing dataset records how broadly AI systems are granted operational
authority as a function of their capability. The numerical values should
therefore be read as internally consistent parameterizations of the model, not
as estimates of observed AI deployment.

\noindent\textbf{Ordinal calibration of breach-loss magnitude.} We set
$\theta=2$, $\mu=2$, and calibrate $\lambda$ using IBM's 2024 breach cost
data \citep{ibm2024}. We normalize the global average breach cost (\$4.88M)
to $\lambda=1$ and scale industries relative to that benchmark. We use the
three highest-cost industries in the IBM ranking that map naturally into our
setting---Healthcare, Financial Services, and Industrial---together with
Retail as a lower-cost benchmark (Table~\ref{tab:lambda}). This calibration
preserves relative breach-loss severity across industries. It does not treat
any dollar value as the economic source of the paradox.

\begin{table}[htbp]
\centering
\caption{Breach Loss Magnitude Parameters}
\label{tab:lambda}
\begin{tabular}{lcc}
\toprule
\textbf{Industry} & \textbf{IBM 2024 breach cost} & $\lambda$ \\
\midrule
Retail & \$3.48M & 0.71 \\
Industrial & \$5.56M & 1.14 \\
Financial Services & \$6.08M & 1.25 \\
Healthcare & \$9.77M & 2.00 \\
\bottomrule
\end{tabular}
\\\smallskip
\begin{minipage}{0.75\textwidth}
\small\textit{Note.} $\lambda$ is an ordinal index of conditional breach loss
magnitude under gap conditions.
\end{minipage}
\end{table}

The global average breach cost serves only as a scaling benchmark. The
numerical location of each threshold depends on the normalization of
productivity and loss scales. The substantive result is the existence of a
nonempty region in which the exposure-linked security burden rises faster than
the productivity gain, not the numerical value of any particular boundary.
Under this normalization, $\lambda=1$ marks whether an industry admits a
private paradox region somewhere in capability space. This condition is
distinct from whether the industry is currently inside that region at a given
capability level. The active paradox condition is $\theta<\lambda$. Thus,
$\lambda>1$ is necessary for a private paradox region to exist, but it is not
sufficient for the paradox to be active at every $\theta$.

At the illustrative capability level $\theta=2$, Retail ($\lambda=0.71$) does
not admit a private paradox region. Industrial ($\lambda=1.14$) and Financial
Services ($\lambda=1.25$) admit paradox regions, but they have already passed
their turning points and therefore exhibit no active paradox at $\theta=2$.
Healthcare ($\lambda=2.00$) lies at the turning point. The cross-industry
comparison therefore identifies which industries admit a paradox region and
which are active at the illustrative capability level, not which industries
are paradox-prone at every capability level.

\noindent\textbf{Endogenous capability choice.}
Table~\ref{tab:upgrade} illustrates Proposition~\ref{prop:endogenous} under
the same ordinal calibration, using $\mu=2$ and a discrete upgrade from a
legacy system ($\theta_L=0.5$) to a frontier system ($\theta_F=2$). The table
shows that the same comparative pattern carries over. Low-gap parameterizations
adopt the frontier upgrade because the productivity gain dominates the
additional security burden, whereas high-gap parameterizations reject it
because the security burden grows too strongly under current governance
conditions.

\begin{table}[H]
\centering
\caption{Capability Upgrade Decisions Under the Gap}
\label{tab:upgrade}
\begin{tabular}{lcccc}
\toprule
Industry & $\lambda$ & $V(\theta_L=0.5)$ & $V(\theta_F=2)$ & Upgrade? \\
\midrule
Retail ($\lambda=0.71$)     & 0.71 & 2.30 & 3.42 & Yes \\
Industrial ($\lambda=1.14$) & 1.14 & 1.86 & 1.96 & Yes \\
Financial ($\lambda=1.25$)  & 1.25 & 1.76 & 1.69 & No  \\
Healthcare ($\lambda=2.00$) & 2.00 & 1.13 & 0.50 & No  \\
\bottomrule
\end{tabular}
\\\smallskip
\begin{minipage}{0.8\textwidth}
\small\textit{Notes.} More severe gap parameterizations reject the frontier
upgrade, while milder ones adopt it. In highly regulated sectors such as
healthcare, the frontier system can amplify breach severity faster than it
raises productivity under current governance conditions.
\end{minipage}
\end{table}

\noindent\textbf{Baseline cross-industry comparison.}
Table~\ref{tab:numerical} reports the model's private-optimum predictions at
$\theta=2$ across the four industry parameterizations. At this capability level
all four industries lie in the interior regime ($\lambda\theta>1$). The table
also shows the economic magnitude of the security discount: although all
industries share the same no-risk benchmark $\alpha^0=4$, optimal deployment
varies sharply with $\lambda$, from 2.62 in Retail to 1.00 in Healthcare.

\begin{table}[H]
\centering
\caption{Model Predictions Across Industries ($\theta=2$, $\mu=2$)}
\label{tab:numerical}
\begin{tabular}{lccccccc}
\toprule
\textbf{Industry} & $\lambda$ & $\alpha^0$ & $\alpha^*$ & $d^*$
    & $p^*$ & \shortstack{Admits paradox\\region?}
    & \shortstack{Active paradox\\at $\theta=2$?} \\
\midrule
Retail            & 0.71 & 4.00 & 2.62 & 0.50 & 0.84 & No  & No \\
Industrial        & 1.14 & 4.00 & 1.98 & 1.00 & 0.66 & Yes & No \\
Financial Services& 1.25 & 4.00 & 1.84 & 1.07 & 0.63 & Yes & No \\
Healthcare        & 2.00 & 4.00 & 1.00 & 1.00 & 0.50 & Yes & Boundary \\
\bottomrule
\end{tabular}
\\\smallskip
\begin{minipage}{0.9\textwidth}
\small\textit{Note.} $\alpha^0=\theta+\mu=4$ (no-risk benchmark). ``Admits
paradox region'' indicates whether $\lambda>1$. ``Active paradox at $\theta=2$''
indicates whether $\partial\alpha^*/\partial\theta<0$ at the illustrated
capability level.
\end{minipage}
\end{table}

\noindent\textbf{Tracing the paradox over capability levels.}
Table~\ref{tab:dynamics} traces optimal deployment over a range of capability
levels. The paradox is not a level effect but a slope effect: in industries
with $\lambda>1$, deployment initially falls as capability rises, then bottoms
out at $\theta=\lambda$, and rises thereafter. Bold entries mark the active
paradox region ($\theta<\lambda$); daggers mark the corner regime
($\lambda\theta\le 1$).

\begin{table}[H]
\centering
\caption{Deployment Paradox: $\alpha^*$ as $\theta$ Increases ($\mu=2$)}
\label{tab:dynamics}
\begin{tabular}{lccccccc}
\toprule
& $\theta\!=\!0.5$ & $\theta\!=\!1$ & $\theta\!=\!1.5$ & $\theta\!=\!2$
    & $\theta\!=\!3$ & $\theta\!=\!4$ & $\theta\!=\!5$ \\
\midrule
Retail ($\lambda\!=\!0.71$)
    & 2.15$^\dag$ & 2.29$^\dag$ & 2.44 & 2.62 & 3.08 & 3.63 & 4.23 \\
Industrial ($\lambda\!=\!1.14$)
    & \textbf{1.93}$^\dag$ & \textbf{1.87} & 1.88 & 1.98
    & 2.30 & 2.73 & 3.22 \\
Financial ($\lambda\!=\!1.25$)
    & \textbf{1.88}$^\dag$ & \textbf{1.76} & 1.76 & 1.84
    & 2.13 & 2.53 & 3.00 \\
Healthcare ($\lambda\!=\!2.00$)
    & \textbf{1.50}$^\dag$ & \textbf{1.17} & \textbf{1.04} & 1.00
    & 1.10 & 1.34 & 1.68 \\
\bottomrule
\end{tabular}
\\\smallskip
\begin{minipage}{0.75\textwidth}
\small\textit{Note.} Bold: $\theta<\lambda$ (active paradox region). $\dag$:
Corner regime ($\lambda\theta\le 1$). Turning point at $\theta=\lambda$.
\end{minipage}
\end{table}

\noindent\textbf{Visualizing the U-shaped deployment paths.}
Figure~\ref{fig:paradox} shows the graphical counterpart to
Table~\ref{tab:dynamics}. The U-shape arises because the security burden
dominates early in the capability trajectory but attenuates once capability
advances far enough relative to $\lambda$.

\begin{figure}[H]
    \centering
    \includegraphics[width=0.68\textwidth]{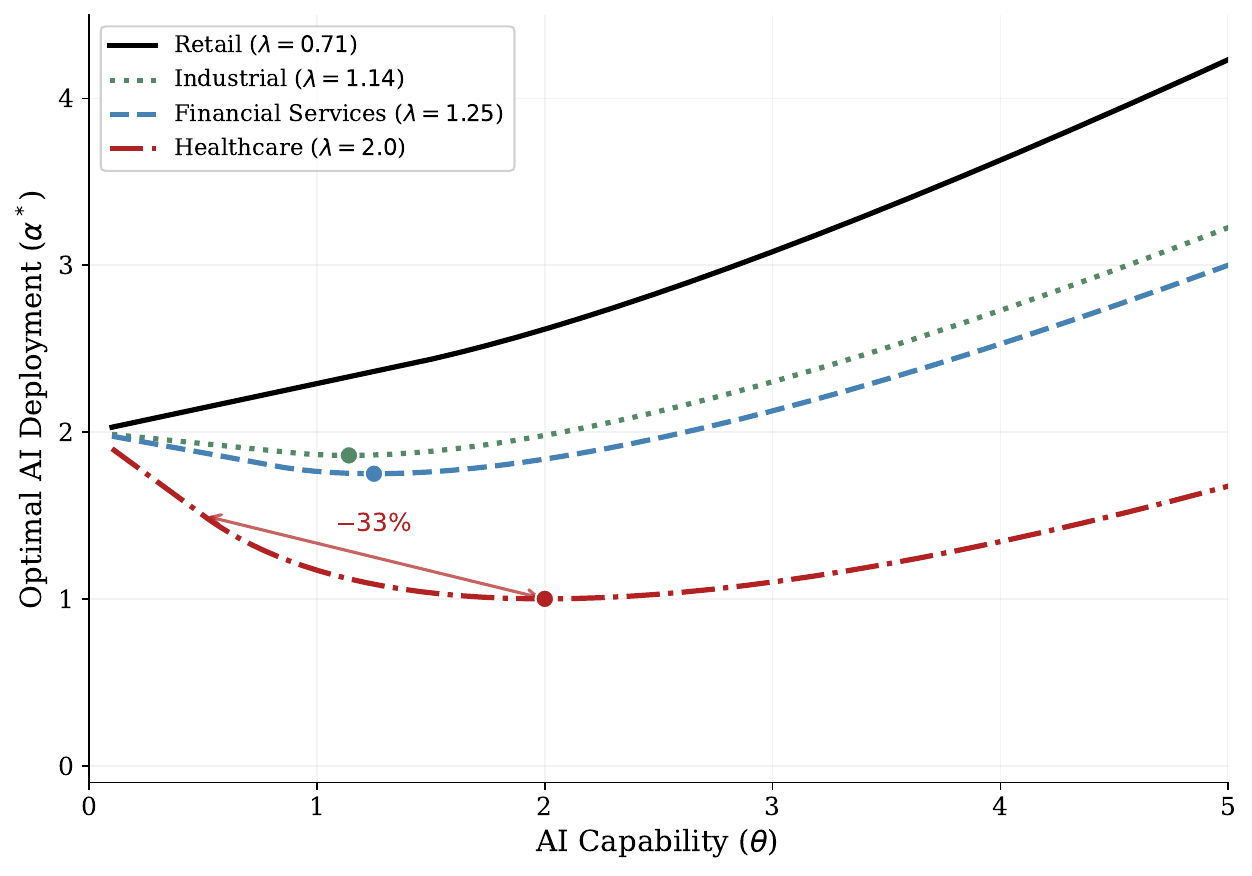}
    \caption{The deployment paradox under the gap. Retail ($\lambda=0.71$)
    increases monotonically. Industries with $\lambda>1$ exhibit a U-shaped
    pattern; turning points are at $\theta=\lambda$ (dots). When governance
    investment reduces $\lambda$ below 1, the paradox is eliminated.}
    \label{fig:paradox}
\end{figure}

\noindent\textbf{From private to social deployment.}
Table~\ref{tab:externality} shows how the same industries are reclassified
once breach externalities are introduced under the Second-Best case.

\begin{table}[H]
\centering
\caption{Social Welfare Predictions ($\theta=2$, $\mu=2$)}
\label{tab:externality}
\begin{tabular}{lccccc}
\toprule
\textbf{Industry} & $\lambda$ & $e$ & $\alpha^*$ & $\alpha^{**}_{SB}$
    & \shortstack{Has paradox\\region? (social)} \\
\midrule
Retail             & 0.71 & 0.3 & 2.62 & 2.26 & No  \\
Industrial         & 1.14 & 0.5 & 1.98 & 1.22 & Yes \\
Financial Services & 1.25 & 1.0 & 1.84 & 0.26 & Yes \\
Healthcare         & 2.00 & 1.5 & 1.00 & 0    & Yes \\
\bottomrule
\end{tabular}
\\\smallskip
\begin{minipage}{0.85\textwidth}
\small\textit{Note.}
$\alpha^{**}_{SB}=\max\{0,\;\theta+\mu+1-(2+e)\sqrt{\lambda\theta}\}$.
The social paradox region is strictly wider than the private one.
\end{minipage}
\end{table}

\section{Robustness Checks}

This section examines whether the deployment paradox depends on the baseline
functional forms. The baseline model uses a Tullock contest function
$(\beta=1)$, a proportional authority-exposure specification $a(\theta)=\theta$
that implies $L=\lambda\alpha\theta$, and a linear capability-productivity term
$\theta\alpha$. The extensions vary one component at a time to show that the
mechanism does not rely on a knife-edge specification.

\subsection{Role of Complementary Organizational Readiness}
\label{subsec:mu_invariance}

Complementary organizational readiness shifts deployment levels but does not
drive the paradox boundary in the baseline model. In
Proposition~\ref{prop:paradox}, the sign of $\partial\alpha^*/\partial\theta$
depends on $\lambda$ and $\theta$, not on $\mu$. Thus higher $\mu$ raises the
level of $\alpha^*$ and firm value, but it does not change the turning point
$\theta=\lambda$ under the baseline specification.
Figure~\ref{fig:mu_robust} illustrates this level shift for $\lambda=1.25$
across four values of $\mu$.

Allowing $\mu$ to partially enter the damage channel changes this boundary but
does not eliminate the mechanism. Appendix~\ref{app:mu_damage} gives the
corresponding condition.

\begin{figure}[htbp]
    \centering
    \includegraphics[width=0.63\textwidth]{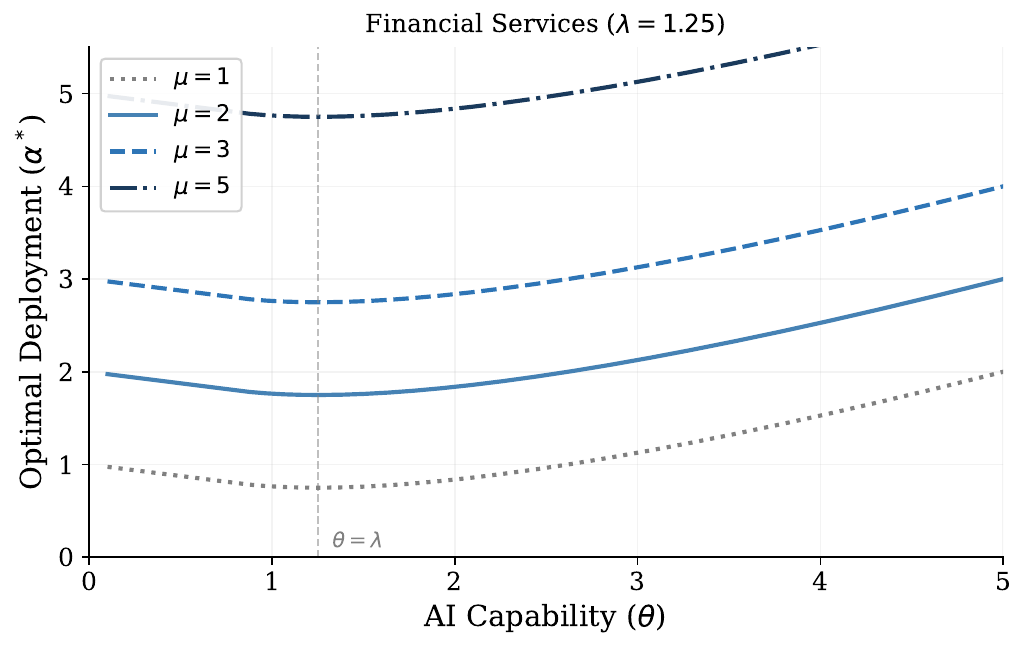}
    \caption{Robustness to $\mu$ ($\lambda=1.25$). $\alpha^*$ vs.\ $\theta$
    for $\mu\in\{1,2,3,5\}$. Higher $\mu$ shifts deployment upward. The
    turning point remains $\theta=\lambda=1.25$.}
    \label{fig:mu_robust}
\end{figure}

\subsection{Generalized Authority Exposure and Damage}
\label{subsec:gen_damage}

Appendix~\ref{app:generalized_L} generalizes the authority-exposure index to
$a(\theta)=\theta^\gamma$, $\gamma\ge 0$. When $\gamma=0$, capability is fully
decoupled from operational exposure, and the deployment paradox disappears.
When $\gamma>0$, sign reversal can arise over a nonempty parameter region
whenever the exposure-induced increase in the security burden exceeds the
marginal productivity gain. The baseline case $\gamma=1$ represents one
transparent specification of a more general authority-exposure mechanism.

\subsection{Generalized Breach Probability}
\label{subsec:gen_prob}

Appendix~\ref{app:generalized_p} generalizes the breach probability to
$p(\alpha,d)=\alpha^\beta/(\alpha^\beta+d)$, $\beta\ge 1$. The baseline case
$\beta=1$ is conservative because optimal defense scales proportionally with
deployment. When $\beta>1$, the attack-surface channel can reinforce the
baseline damage-channel mechanism locally under interior feasibility.

\subsection{Generalized Productivity Function}
\label{subsec:gen_prod}

Appendix~\ref{app:generalized_prod} replaces the baseline productivity term
$\theta\alpha$ with $\theta^\eta\alpha$, where $\eta>0$. Sign reversal arises
when the marginal security burden from capability exceeds the marginal
productivity gain. The condition changes with $\eta$, but the economic logic
does not. Stronger capability-productivity elasticity narrows the paradox
region; stronger capability-damage exposure widens it.

\subsection{Parameter Space}
\label{subsec:param_space}

Figure~\ref{fig:heatmap} maps optimal deployment $\alpha^*$ across the
$(\theta,\lambda)$ space. The paradox region appears over a broad range, not
a knife-edge point.

\begin{figure}[H]
    \centering
    \includegraphics[width=0.63\textwidth]{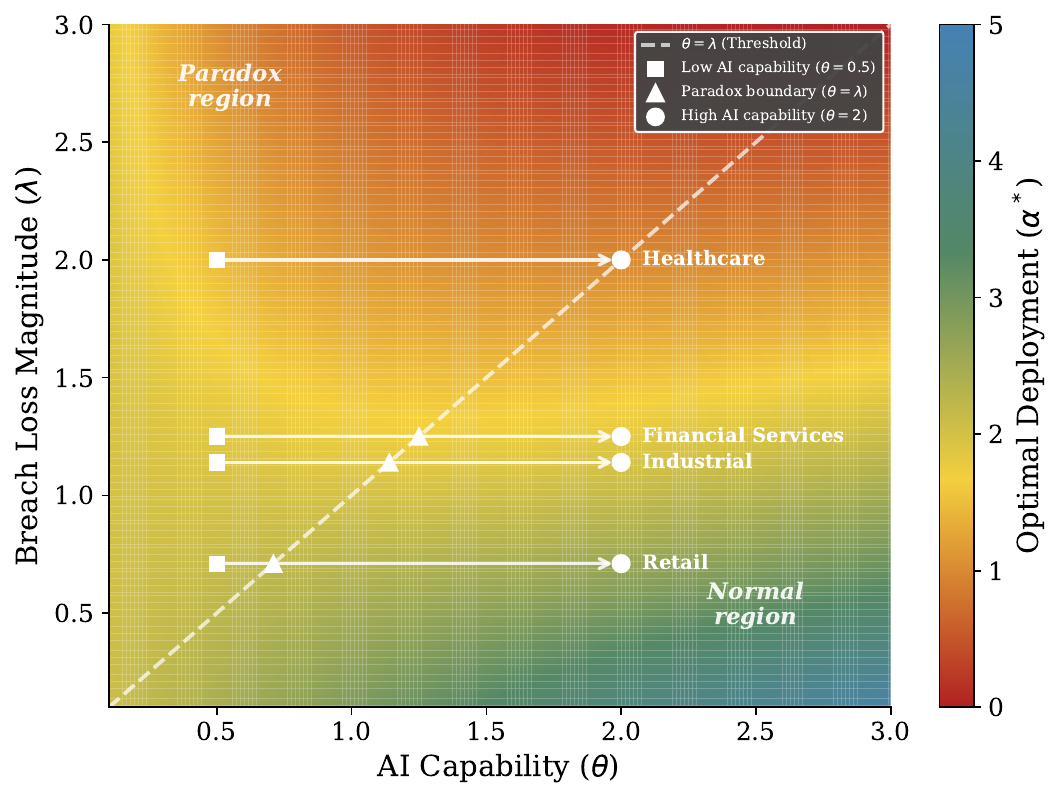}
    \caption{Industry regimes in $(\theta,\lambda)$ space. Dashed diagonal:
    paradox boundary $\theta=\lambda$. Arrows trace illustrative trajectories
    under the ordinal calibration of Table~\ref{tab:lambda}.}
    \label{fig:heatmap}
\end{figure}

\section{Discussion}
\label{sec:discussion}

\subsection{Industry Heterogeneity and Testable Predictions}

\begin{corollary}[Industry Sorting]
    \label{cor:industry}
    Industries sort into three regimes. First, when $\lambda \le 1$, no
    deployment paradox arises. Second, when $\lambda > 1$ and $\theta <
    \lambda$, the firm lies in the paradox region and optimal deployment
    decreases with capability. Third, when $\lambda > 1$ and $\theta \ge
    \lambda$, the paradox no longer holds and optimal deployment is weakly
    increasing in capability.
\end{corollary}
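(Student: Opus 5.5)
The plan is to read Corollary~\ref{cor:industry} off Proposition~\ref{prop:paradox}, using the explicit derivative in \eqref{eq:paradox}. Throughout I work under Assumption~\ref{assump:interior}, so that $\alpha^*(\theta)>0$ and the formula of Proposition~\ref{prop:alpha_star} holds for every $\theta>0$. The three regimes partition the $(\theta,\lambda)$ space, so it suffices to sign $\partial\alpha^*/\partial\theta$ in each one. The only real care is at the regime boundaries $\theta=1/\lambda$ and $\theta=\lambda$. There I will invoke the $C^1$ property of $\alpha^*$ established in Propositions~\ref{prop:alpha_star} and~\ref{prop:discount}: since $\Delta=\alpha^0-\alpha^*$ is $C^1$, so is $\alpha^*$.

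\emph{Regime 1 ($\lambda\le 1$).} In the corner branch the derivative is $1-\lambda\ge 0$. In the interior branch, $\lambda\theta>1$ together with $\lambda\le1$ gives $\theta>1/\lambda\ge\lambda$, hence $\lambda/\theta<1$ and $1-\sqrt{\lambda/\theta}>0$. So $\alpha^*$ is weakly increasing everywhere, which is exactly the last clause of Proposition~\ref{prop:paradox}.

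\emph{Regime 2 ($\lambda>1$, $\theta<\lambda$).} This is part (i) of Proposition~\ref{prop:paradox}. For completeness I will check it directly. Since $1/\lambda<1<\lambda$, the interval $(0,\lambda)$ splits into the corner piece $(0,1/\lambda]$ and the interior piece $(1/\lambda,\lambda)$. On the corner piece the slope is $1-\lambda<0$. On the interior piece, $\theta<\lambda$ gives $\sqrt{\lambda/\theta}>1$, so the slope is again negative. Continuity at $1/\lambda$ then gives strict decrease on the whole interval. This is the only place a small gluing argument is needed.

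\emph{Regime 3 ($\lambda>1$, $\theta\ge\lambda$).} Here $\theta\ge\lambda>1/\lambda$, so we are in the interior regime. The slope $1-\sqrt{\lambda/\theta}$ is $\ge0$, with equality only at $\theta=\lambda$. This gives weak increase, and strict increase for $\theta>\lambda$ by part (ii) of Proposition~\ref{prop:paradox}. The point $\theta=\lambda$ is the minimizer in part (iii), consistent with the "weakly" wording.

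I expect no genuine obstacle. The one step needing attention is the bookkeeping that the corner and interior regimes do not overlap across these three cases. In particular, I need $\lambda\le1$ to force the interior branch into $\theta>\lambda$, and $\lambda>1$ to force $\theta\ge\lambda$ into the interior branch. Both follow from comparing $1/\lambda$ with $\lambda$.
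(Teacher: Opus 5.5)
Your proposal is correct and follows essentially the paper's route: the corollary restates Proposition~\ref{prop:paradox}. Its appendix proof likewise signs $1-\lambda$ and $1-\sqrt{\lambda/\theta}$ branch by branch, uses $\theta>1/\lambda\ge\lambda$ when $\lambda\le 1$, and joins the corner interval $(0,1/\lambda]$ to the interior interval $(1/\lambda,\lambda)$ when $\lambda>1$. Continuity of $\alpha^*$ at $\theta=1/\lambda$ already suffices for that gluing, so the $C^1$ property is more than you need.
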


Standard capability-driven models predict that better AI weakly increases
deployment across industries. Our model instead predicts a sign reversal that
depends on the firm's conditional breach-loss environment. This yields a
natural difference-in-slopes test. One can regress industry- or firm-level AI
deployment intensity on capability advances, interacted with proxies for
conditional breach loss magnitude such as breach-cost ratios, downstream
propagation potential, or regulatory-liability measures. The model predicts a
negative interaction in the paradox region and a positive interaction outside
it.

A related empirical challenge concerns the operationalization of capability and
authority. In the model, $\theta$ captures capability, while the governance gap
links capability to operational exposure. Natural proxies can include system
autonomy, data-access breadth, task consequentiality, and authority
scope---dimensions partly reflected in risk-classification frameworks such as
the NIST AI Risk Management Framework and the EU AI Act \citep{nist2023,
euaiact2024}.

The model also shows heterogeneity among firms with access to similar frontier
systems. Once breach externalities are taken into account, private deployment
can exceed the social optimum even in sectors that lie outside the private
paradox region. These implications suggest three empirical patterns: capability
improvements should increase deployment less, or reduce deployment, in
high-loss sectors; this response should weaken among firms with stronger
governance controls; and weak-governance firms may retain legacy systems or
restrict frontier systems to sandboxed use cases. Testing these patterns
requires separate measures of AI capability, authority exposure, and governance
maturity. The prediction is falsifiable. If high-loss firms with weak authority
decoupling show monotone increases in AI deployment as capability improves,
with no reduction or flattening over the predicted range, the deployment
paradox would not describe that setting. Empirically, the relevant test would
combine variation in breach-loss environments with firm-level measures of
authority exposure or governance maturity, such as access-control maturity,
segmentation quality, or privilege-audit outcomes.

\subsection{Sources of the Governance-Capability Gap}

The model does not require a unique micro-foundation for why the
governance-capability gap arises. Several organizational frictions can create
this linkage. Firms may find it costly to redesign segmentation and privilege
boundaries. Operating units may resist access restrictions that slow workflow
integration. Managers may also find it difficult to audit the authority granted
to rapidly evolving AI systems. These sources differ, but they have the same
implication for the model: they allow capability to expand faster than
governance controls can contain it.

\subsection{Managerial and Policy Implications}

\paragraph{Managerial implications.}
The model separates productivity-side investments from security-side
investments. Workflow redesign, data governance, process standardization, and
organizational learning raise deployment value, but they need not reduce
conditional breach-loss magnitude. By contrast, containment quality, data
segmentation, and least-privilege architecture lower the loss environment, but
they need not create productivity gains by themselves.

This distinction gives managers a diagnostic use for the security discount. A
large discount indicates that governance maturity is constraining AI value. In
such environments, reducing breach-loss magnitude may increase economically
viable deployment more than further capability expansion.

The same logic applies to capability upgrades. A firm in the sandboxing
trap---where frontier deployment is value-reducing under current governance
conditions---may gain more from reducing breach-loss magnitude than from
adopting a more capable system. Governance investment that precedes capability
advancement expands the range of productive deployment.

\paragraph{Policy implications.}
The model does not imply that deployment restrictions are the preferred policy
instrument. Deployment-only regulation appears as a second-best response when
policy cannot directly change the firm's defense and containment environment.

Policies that reduce conditional breach damage target the mechanism that
creates the paradox. Segmentation requirements, least-privilege rules,
containment standards, controls on lateral movement, and auditable authority
boundaries reduce the damage that follows from compromise. These policies
differ from generic security-spending mandates. Spending mandates can increase
current defensive effort, which corresponds to $d$ in the model. They do not
necessarily reduce post-breach propagation, which corresponds to $\lambda$.

Deployment caps, use-based restrictions, or high-risk deployment tiers are
therefore fallback instruments. A more direct policy design pairs broader
deployment permissions with governance conditions, such as data segmentation,
least-privilege access, incident containment, and authority logging.

\subsection{Limitations and Future Research}

Our study has limitations, which also suggest opportunities for future research. First, the model is static. We treat $\lambda$ as given within a single-period deployment problem and solve for optimal deployment at a contemporaneous capability level $\theta$. This abstraction does not capture the joint evolution of capability, defense, and governance over time, but it isolates the cross-sectional tradeoff a firm faces at a given capability-governance configuration. Appendix~\ref{app:gov_adjustment} sketches the broader dynamic logic by showing how governance adjustment and capability growth jointly shift the paradox boundary. Appendix~\ref{app:endogenous_gov} shows that a residual gap can remain privately rational when governance restructuring is costly. A full dynamic extension could endogenize governance investment, allow firms to trade off current deployment against future reductions in $\lambda$, and characterize repeated re-optimization as capability and governance co-evolve.

Second, the baseline abstracts from several within-firm complexities. Security investment $d$ is modeled as generic defense and therefore does not capture the possibility that broader AI deployment itself improves monitoring or response. Relatedly, the model does not allow AI capability to improve governance efficiency---for instance, through AI-powered anomaly detection, automated access-control enforcement, or continuous compliance monitoring---which would create a countervailing channel in which higher $\theta$ partially reduces $\lambda$. These omitted channels matter because they determine whether capability only amplifies exposure or can also help contain it. Richer models could therefore examine defensive-AI complementarities and AI-assisted governance maturation more explicitly.

Third, the attacker is passive. The model therefore captures how the firm's own deployment and defense choices shape cyber risk, but not how adversaries strategically respond to more capable or more broadly deployed AI. Appendix~\ref{app:strategic_attacker} sketches this direction. If attacker capability also scaled with $\theta$, the paradox region could widen because more capable AI would become a more attractive and potentially more exploitable target. A fuller extension would model strategic interaction between firm and attacker directly.

Fourth, $\alpha$ is a single deployment index, whereas firms in practice allocate AI across applications with different authority, exposure, and breach-loss profiles. Appendix~\ref{app:heterogeneous_apps} sketches a heterogeneous-applications extension in which firms allocate deployment across high-authority and low-authority uses. Such a framework could show how firms substitute toward lower-authority applications even when aggregate deployment declines under the governance-capability gap.

Finally, the baseline model focuses on the firm's deployment problem in isolation and therefore abstracts from strategic interaction across firms. An industry-equilibrium extension could examine whether competitive pressure leads firms to collectively over-deploy into the paradox region, eroding industry-wide value while leaving aggregate cyber exposure elevated. Such a setting would further clarify when sector-level governance standards are socially valuable.

\section{Conclusion}

This paper identifies a condition under which stronger AI capability can reduce
rather than expand economically viable deployment. When governance lags
capability, more capable systems can carry greater authority exposure, causing
the security burden of deployment to rise faster than the productivity gain.
The model and its results have clear boundaries: they do not capture dynamic
learning, strategic attackers, heterogeneous applications, or competitive
pressure across firms. Despite these limitations, the paper shows that the
value of AI capability depends on the organizational conditions under which
firms deploy it. When governance decouples capability from authority exposure,
capability improvements again support broader deployment.

\newpage
\bibliographystyle{apalike}
\bibliography{references_IS}

\newpage
\appendix
\setcounter{figure}{0}
\setcounter{table}{0}
\renewcommand{\thefigure}{A\arabic{figure}}
\renewcommand{\thetable}{A\arabic{table}}

\section{Global Solution and Boundary Behavior}
\label{app:global}

Define $f_C(\theta)\equiv\mu+\theta(1-\lambda)$ and
$f_I(\theta)\equiv\theta+\mu+1-2\sqrt{\lambda\theta}$.
Under Assumption~\ref{assump:interior}, $\alpha^*=f_C$ in the corner regime
and $\alpha^*=f_I$ in the interior regime.

At $\theta=1/\lambda$: $f_C=f_I=\mu+1/\lambda-1$ and $f_C'=f_I'=1-\lambda$
($C^1$ continuity). In the interior regime: $f_I''=\frac{1}{2}\sqrt{\lambda/\theta^3}>0$;
minimum at $\theta=\lambda$ with $\alpha^*_{\min}=\mu+1-\lambda>0$. When
$\lambda>1$ the U-shape spans both regimes; when $\lambda\le 1$, $\alpha^*$
is increasing throughout. $\mu$ enters only as an additive constant; all
comparative statics are invariant to $\mu$.

\paragraph{Positivity across regimes.}
Under Assumption~\ref{assump:interior}, $\alpha^*(\theta)>0$ for all
$\theta>0$. When $\lambda\le 1$: $\alpha^*$ is increasing throughout and
$\alpha^*(0)=\mu>0$. When $\lambda>1$: $\alpha^*_{\min}=\mu+1-\lambda>0$ at
the interior minimum $\theta=\lambda$. The corner regime bounds satisfy
$\alpha^*(0)=\mu>0$ and
$\alpha^*(1/\lambda)=\mu+1/\lambda-1>\mu+1-\lambda>0$ by AM--GM applied to
$\lambda+1/\lambda\ge 2$. Hence $\alpha^*>0$ globally.

\section{Proof of Lemma~\ref{lem:dstar}: Optimal Security Investment}
\label{app:proof_dstar}

\begin{proof}
For a given deployment level $\alpha>0$, differentiate the profit function
with respect to $d$:
\[
\frac{\partial \pi}{\partial d}
=
\frac{\lambda \alpha^2 \theta}{(\alpha+d)^2}-1.
\]
Evaluating at the boundary $d=0$ gives
\[
\left.\frac{\partial \pi}{\partial d}\right|_{d=0}
=
\lambda\theta-1.
\]
Thus, when $\lambda\theta\le 1$, the marginal benefit of AI-specific defense
never exceeds its marginal cost, so the non-negativity constraint binds and
$d^*=0$. Economically, this is the case in which the combined effect of
capability and breach-loss magnitude is too small to justify dedicated
defensive expenditure.

When $\lambda\theta>1$, setting the first-order condition equal to zero yields
\[
\frac{\lambda \alpha^2 \theta}{(\alpha+d)^2}=1
\quad\Longrightarrow\quad
\alpha+d=\alpha\sqrt{\lambda\theta}
\quad\Longrightarrow\quad
d=\alpha(\sqrt{\lambda\theta}-1).
\]
Because
\[
\frac{\partial^2 \pi}{\partial d^2}
=
-\frac{2\lambda\alpha^2\theta}{(\alpha+d)^3}<0,
\]
the objective is strictly concave in $d$, so this solution is the unique
global maximum. Therefore,
\[
d^*(\alpha,\theta)=
\begin{cases}
0, & \lambda\theta\le 1,\\[4pt]
\alpha(\sqrt{\lambda\theta}-1), & \lambda\theta>1.
\end{cases}
\]
This expression has a direct operational reading: once defense becomes
worthwhile, it scales in direct proportion to deployment $\alpha$, and the
per-unit defensive burden $\sqrt{\lambda\theta}-1$ rises with both
breach-loss magnitude and capability.
\end{proof}

\section{Proof of Lemma~\ref{lem:pstar}: Equilibrium Breach Probability}
\label{app:proof_pstar}

\begin{proof}
By Lemma~\ref{lem:dstar}, the firm's optimal defense rule is already available
in closed form. This proof simply substitutes that rule into the
breach-probability function
\[
p(\alpha,d)=\frac{\alpha}{\alpha+d}.
\]

If $\lambda\theta\le 1$, then $d^*=0$, so
\[
p^*=\frac{\alpha}{\alpha+0}=1.
\]
At this corner, the AI-attributable incremental exposure margin is left fully
undefended.

If $\lambda\theta>1$, then
$d^*=\alpha(\sqrt{\lambda\theta}-1)$, so
\[
p^*
=
\frac{\alpha}{\alpha+\alpha(\sqrt{\lambda\theta}-1)}
=
\frac{\alpha}{\alpha\sqrt{\lambda\theta}}
=
\frac{1}{\sqrt{\lambda\theta}}.
\]
In both regimes, $\alpha$ cancels out, so
\[
p^*(\theta)=
\begin{cases}
1, & \lambda\theta\le 1,\\[4pt]
\dfrac{1}{\sqrt{\lambda\theta}}, & \lambda\theta>1.
\end{cases}
\]
This cancellation is economically important. Under the baseline Tullock form, a
broader AI footprint does not mechanically raise equilibrium breach
probability because a rational firm scales defense in step with deployment.
Deployment therefore changes the cost of maintaining security, not the
equilibrium breach probability itself.
\end{proof}

\section{Proof of Lemma~\ref{lem:reduced}: Reduced-Form Profit}
\label{app:proof_lemma3}

\begin{proof}
Lemma~\ref{lem:dstar} gives the firm's optimal defense rule, and
Lemma~\ref{lem:pstar} gives the associated equilibrium breach probability.
This proof combines those two objects and substitutes them into
\eqref{eq:profit} to obtain profit as a function of deployment alone.

\medskip
\noindent\textbf{Corner regime ($\lambda\theta\le 1$).}
From Lemmas~\ref{lem:dstar}--\ref{lem:pstar}, we have $d^*=0$ and $p^*=1$.
Hence expected security loss is
\[
p^*\cdot L = 1\cdot (\lambda\alpha\theta)=\lambda\alpha\theta.
\]
Substituting into \eqref{eq:profit} gives
\[
\pi(\alpha;\theta)
=
(\theta+\mu)\alpha-\frac{\alpha^2}{2}-\lambda\alpha\theta
=
\bigl[\mu+\theta(1-\lambda)\bigr]\alpha-\frac{\alpha^2}{2}.
\]
Intuitively, capability still raises productivity in this regime, but the firm
undertakes no additional defense, so expected breach losses reduce the net
marginal value of capability to $1-\lambda$.

\medskip
\noindent\textbf{Interior regime ($\lambda\theta>1$).}
From Lemmas~\ref{lem:dstar}--\ref{lem:pstar},
\[
d^*=\alpha(\sqrt{\lambda\theta}-1),
\qquad
p^*=\frac{1}{\sqrt{\lambda\theta}}.
\]
Expected security loss becomes
\[
p^*\cdot L
=
\frac{1}{\sqrt{\lambda\theta}}\cdot (\lambda\alpha\theta)
=
\alpha\sqrt{\lambda\theta}.
\]
Substituting into \eqref{eq:profit},
\begin{align*}
\pi(\alpha;\theta)
&=
(\theta+\mu)\alpha-\frac{\alpha^2}{2}
-\alpha\sqrt{\lambda\theta}
-\alpha(\sqrt{\lambda\theta}-1)\\
&=
(\theta+\mu)\alpha-\frac{\alpha^2}{2}
-2\alpha\sqrt{\lambda\theta}+\alpha\\
&=
\bigl[\theta+\mu+1-2\sqrt{\lambda\theta}\bigr]\alpha
-\frac{\alpha^2}{2}.
\end{align*}
Intuitively, in this regime the firm bears the security cost of deployment
through two channels: residual breach damage and defensive expenditure.
Together, these imply a total security-related deduction of
$(2\sqrt{\lambda\theta}-1)\alpha$ in reduced form.
Equivalently, the linear coefficient on deployment is
$\theta+\mu+1-2\sqrt{\lambda\theta}$, so its derivative with respect to
capability is $1-\sqrt{\lambda/\theta}$.
When $\theta<\lambda$, the security burden rises faster than the productivity
gain, generating the deployment paradox.

Therefore,
\[
\pi(\alpha;\theta)=
\begin{cases}
\bigl[\mu+\theta(1-\lambda)\bigr]\alpha-\dfrac{\alpha^2}{2},
& \lambda\theta\le 1,\\[8pt]
\bigl[\theta+\mu+1-2\sqrt{\lambda\theta}\bigr]\alpha-\dfrac{\alpha^2}{2},
& \lambda\theta>1.
\end{cases}
\]
The reduced-form profit makes the tradeoff transparent:
higher capability increases gross value, but under the governance gap it also increases the security-related cost of deployment.
\end{proof}

\section{Proof of Proposition~\ref{prop:alpha_star}: Optimal AI Deployment}
\label{app:proof_alpha_star}

\begin{proof}
Lemma~\ref{lem:reduced} shows that once the firm chooses defensive spending
optimally within a given regime, its remaining decision is how broadly to
deploy AI ($\alpha$). In each regime, the firm therefore chooses $\alpha$ by
weighing the marginal productivity gain from broader deployment against the
marginal security burden implied by that regime. This yields a single
first-order condition for optimal deployment in each regime, and the two
regime-specific expressions connect smoothly at the boundary.

\medskip
\noindent\textbf{Corner regime ($\lambda\theta\le 1$).}
From Lemma~\ref{lem:reduced},
\[
\pi(\alpha;\theta)=\bigl[\mu+\theta(1-\lambda)\bigr]\alpha-\frac{\alpha^2}{2}.
\]
The first-order condition is
\[
\frac{\partial\pi}{\partial\alpha}
=\mu+\theta(1-\lambda)-\alpha=0
\quad\Longrightarrow\quad
\alpha^*(\theta)=\mu+\theta(1-\lambda).
\]
Because $\partial^2\pi/\partial\alpha^2=-1<0$, the objective is strictly
concave in $\alpha$, so this is the unique global maximum. Economically, this expression has a clear organizational interpretation. In the
corner regime, the firm does not find it worthwhile to invest in AI-specific
defense, so its remaining choice is how broadly to deploy AI ($\alpha$) given
the expected breach burden it must bear. Complementary organizational readiness, captured by $\mu$, always
pushes deployment upward because it raises the value of broader adoption
without amplifying breach damage. Capability $\theta$, by contrast, expands
deployment only when the loss environment is sufficiently mild. Its marginal
effect in this regime is $1-\lambda$: when $\lambda<1$, the productivity gain
from greater capability dominates and the firm broadens deployment; when
$\lambda>1$, greater capability instead makes deployment less attractive even
before defensive investment becomes worthwhile, because each increment of
capability raises expected breach loss more than it raises productivity. Under
Assumption~\ref{assump:interior} ($\lambda<\mu+1$), $\alpha^*(\theta)>0$
throughout the corner region.

\medskip
\noindent\textbf{Interior regime ($\lambda\theta>1$).}
From Lemma~\ref{lem:reduced},
\[
\pi(\alpha;\theta)=\bigl[\theta+\mu+1-2\sqrt{\lambda\theta}\bigr]\alpha
-\frac{\alpha^2}{2}.
\]
The first-order condition is
\[
\frac{\partial\pi}{\partial\alpha}
=\theta+\mu+1-2\sqrt{\lambda\theta}-\alpha=0
\quad\Longrightarrow\quad
\alpha^*(\theta)=\theta+\mu+1-2\sqrt{\lambda\theta}.
\]
Again $\partial^2\pi/\partial\alpha^2=-1<0$, so this is the unique global
maximum. This expression has a direct operational reading: the firm's
productive capacity $\theta+\mu+1$ is reduced by a security burden
$2\sqrt{\lambda\theta}$, capturing both residual breach damage and defensive
expenditure once AI-specific defense becomes worthwhile. The $+1$ term reflects the firm's ability to offset part of the breach-loss
burden through defense: without defense, that burden would be $\lambda\theta$
rather than $2\sqrt{\lambda\theta}-1$. Under Assumption~\ref{assump:interior},
$\alpha^*_{\min}=\mu+1-\lambda>0$ (Appendix~\ref{app:global}), so interior
deployment remains strictly positive even at the paradox trough.

\medskip
\noindent\textbf{Regime continuity at $\theta=1/\lambda$.}
At the boundary, the corner expression gives
\[
\alpha^*=\mu+\frac{1-\lambda}{\lambda}=\mu+\frac{1}{\lambda}-1,
\]
while the interior expression gives
\[
\alpha^*=\frac{1}{\lambda}+\mu+1-2\sqrt{\lambda\cdot\frac{1}{\lambda}}
=\frac{1}{\lambda}+\mu+1-2
=\mu+\frac{1}{\lambda}-1.
\]
The two expressions coincide, confirming that optimal deployment is continuous
across regimes. Thus, the regime shift at $\theta=1/\lambda$ appears not as a
jump in deployment, but as a change in slope: once AI-specific defense becomes
worthwhile, the firm's response function continues smoothly but becomes
steeper or flatter depending on the regime.
\end{proof}

\section{Proof of Proposition~\ref{prop:discount}: Security Discount}
\label{app:proof_discount}

\begin{proof}
Proposition~\ref{prop:alpha_star} already characterizes $\alpha^*$ in closed
form in both regimes, so the security discount $\Delta\equiv\alpha^0-\alpha^*$
follows by direct subtraction from the no-risk benchmark $\alpha^0=\theta+\mu$.

\medskip
\noindent\textbf{Corner regime ($\lambda\theta\le 1$).}
From Proposition~\ref{prop:alpha_star},
$\alpha^*=\mu+\theta(1-\lambda)$. Therefore,
\[
\Delta(\theta)
=(\theta+\mu)-\bigl[\mu+\theta(1-\lambda)\bigr]
=\theta-\theta(1-\lambda)
=\lambda\theta.
\]
Economically, in this regime the firm makes no AI-specific defensive
investment, so the entire expected breach damage $\lambda\alpha\theta$ shows
up as a reduction in deployment relative to the no-risk benchmark. The
discount therefore rises linearly with both capability and breach-loss
magnitude.

\medskip
\noindent\textbf{Interior regime ($\lambda\theta>1$).}
From Proposition~\ref{prop:alpha_star},
$\alpha^*=\theta+\mu+1-2\sqrt{\lambda\theta}$. Therefore,
\[
\Delta(\theta)
=(\theta+\mu)-\bigl[\theta+\mu+1-2\sqrt{\lambda\theta}\bigr]
=2\sqrt{\lambda\theta}-1.
\]
Here the firm is now paying twice: once for residual breach damage after
defense, and once for the defensive expenditure itself. That is why the
discount takes the square-root form $2\sqrt{\lambda\theta}-1$ rather than the
linear form $\lambda\theta$ of the corner regime---defensive investment
blunts the damage channel but introduces its own cost.

\medskip
\noindent\textbf{Continuity and smoothness at $\theta=1/\lambda$.}
At the boundary, the corner expression yields
$\Delta=\lambda\cdot(1/\lambda)=1$, while the interior expression yields
$\Delta=2\sqrt{\lambda\cdot(1/\lambda)}-1=2-1=1$. The two values coincide.

For the first derivative, the corner expression gives
$\Delta'(\theta)=\lambda$, while the interior expression gives
\[
\Delta'(\theta)=\frac{\partial}{\partial\theta}\bigl(2\sqrt{\lambda\theta}-1\bigr)
=\sqrt{\frac{\lambda}{\theta}}.
\]
At $\theta=1/\lambda$, the interior derivative equals
$\sqrt{\lambda\cdot\lambda}=\lambda$, matching the corner derivative. So
$\Delta$ is continuously differentiable across the regime boundary, even
though its functional form changes. This matters economically: the security
discount does not jump when the firm starts investing in defense; it only
shifts from linear to concave growth, reflecting the firm's ability to
partially absorb the damage through defensive expenditure.

\medskip
\noindent\textbf{Monotonicity and $\mu$-independence.}
In the corner regime, $\partial\Delta/\partial\lambda=\theta>0$ and
$\partial\Delta/\partial\theta=\lambda>0$. In the interior regime,
$\partial\Delta/\partial\lambda=\sqrt{\theta/\lambda}>0$ and
$\partial\Delta/\partial\theta=\sqrt{\lambda/\theta}>0$. In both regimes, the
discount grows with $\lambda$ and $\theta$, and is independent of $\mu$
because $\mu$ enters $\alpha^0$ and $\alpha^*$ additively and cancels in the
difference. The security discount therefore measures the pure gap-induced
wedge between productive and realized deployment, isolated from complementary organizational-readiness effects.
\end{proof}

\section{Proof of Proposition~\ref{prop:paradox}: Deployment Paradox}
\label{app:proof_paradox}

\begin{proof}
By Proposition~\ref{prop:alpha_star}, optimal deployment is already available
in closed form in both regimes. The proof therefore reduces to differentiating
$\alpha^*(\theta)$ with respect to capability and studying the sign of the
derivative.

\medskip
\noindent\textbf{Corner regime ($\lambda\theta\le 1$).}
From Proposition~\ref{prop:alpha_star},
\[
\alpha^*(\theta)=\mu+\theta(1-\lambda),
\]
so
\[
\frac{\partial \alpha^*}{\partial \theta}=1-\lambda.
\]
This is strictly negative if and only if $\lambda>1$. Thus, in the corner
regime, capability reduces deployment whenever breach-loss magnitude is high
enough.

\medskip
\noindent\textbf{Interior regime ($\lambda\theta>1$).}
From Proposition~\ref{prop:alpha_star},
\[
\alpha^*(\theta)=\theta+\mu+1-2\sqrt{\lambda\theta},
\]
so
\[
\frac{\partial \alpha^*}{\partial \theta}
=
1-\frac{\partial}{\partial\theta}\bigl(2\sqrt{\lambda\theta}\bigr)
=
1-\sqrt{\frac{\lambda}{\theta}}.
\]
This derivative is negative if and only if $\theta<\lambda$, zero at
$\theta=\lambda$, and positive for $\theta>\lambda$. The turning point
$\theta=\lambda$ is therefore economically meaningful: below it, the security
burden created by additional capability dominates the productivity gain;
above it, the productivity channel dominates again.

Now suppose $\lambda>1$. The corner paradox region is defined by
$\lambda\theta\le 1$ together with $1-\lambda<0$, which holds on
$(0,1/\lambda]$. The interior paradox region is defined by
$\lambda\theta>1$ and $\theta<\lambda$, which holds on $(1/\lambda,\lambda)$.
Since $1/\lambda<\lambda$ when $\lambda>1$, these two intervals adjoin,
yielding the full paradox region $(0,\lambda)$. At $\theta=\lambda$,
$\alpha^*$ attains its global minimum
\[
\alpha^*_{\min}=\mu+1-\lambda.
\]

If $\lambda\le 1$, then in the corner regime
$\partial \alpha^*/\partial\theta=1-\lambda\ge 0$, and in the interior regime
\[
\theta>\frac{1}{\lambda}\ge \lambda
\quad\Longrightarrow\quad
1-\sqrt{\frac{\lambda}{\theta}}\ge 0.
\]
So $\partial \alpha^*/\partial\theta\ge 0$ for all $\theta>0$, and no paradox
arises. The paradox therefore requires a sufficiently severe loss environment;
it is not a generic feature of better AI.
\end{proof}

\section{Proof of Proposition~\ref{prop:complementarity}: Effect of Governance Maturity on Deployment}
\label{app:proof_complementarity}

\begin{proof}
The proof follows by direct differentiation of $\alpha^*(\theta)$ from
Proposition~\ref{prop:alpha_star} with respect to $\lambda$ in each regime.
$\lambda$ is the model's index of inherited breach-loss magnitude, so reducing
$\lambda$ corresponds to closing the governance-capability gap through
improvements in containment quality, data segmentation, or access-control
architecture.

\medskip
\noindent\textbf{Corner regime ($\lambda\theta\le 1$).}
From $\alpha^*=\mu+\theta(1-\lambda)$,
\[
\frac{\partial\alpha^*}{\partial\lambda}=-\theta<0.
\]
Economically, in this regime the firm makes no AI-specific defensive
investment, so any reduction in $\lambda$ translates one-for-one into a
reduction of expected breach damage per unit of deployment. Each unit of
capability therefore gains $\theta$ units of deployment space from a unit
reduction in $\lambda$: higher-capability firms benefit more from governance
investment than lower-capability ones.

\medskip
\noindent\textbf{Interior regime ($\lambda\theta>1$).}
From $\alpha^*=\theta+\mu+1-2\sqrt{\lambda\theta}$,
\[
\frac{\partial\alpha^*}{\partial\lambda}
=-2\cdot\frac{1}{2}\sqrt{\frac{\theta}{\lambda}}
=-\sqrt{\frac{\theta}{\lambda}}<0.
\]
Once AI-specific defense becomes worthwhile, the marginal effect of
governance investment is attenuated by the firm's own defensive response:
reducing $\lambda$ also reduces the per-unit defensive burden
$\sqrt{\lambda\theta}-1$, so part of the gain is absorbed on the defense
margin rather than showing up directly in deployment. That is why the
interior derivative takes the $\sqrt{\theta/\lambda}$ form rather than the
linear $-\theta$ form of the corner regime.

\medskip
\noindent\textbf{Continuity at $\theta=1/\lambda$.}
At the regime boundary, the corner derivative equals $-\theta=-1/\lambda$,
while the interior derivative equals
$-\sqrt{\theta/\lambda}=-\sqrt{(1/\lambda)/\lambda}=-1/\lambda$. The two
expressions coincide, so $\partial\alpha^*/\partial\lambda$ is continuous
across regimes.

\medskip
\noindent\textbf{Overall implication.}
In both regimes, $\partial\alpha^*/\partial\lambda<0$, so closing the
governance-capability gap raises optimal deployment. The magnitude depends
on the current regime---proportional to $\theta$ in the corner regime, and
to $\sqrt{\theta/\lambda}$ in the interior regime---but the sign is
unambiguous. Governance investment and AI deployment are therefore
complements, not substitutes: the firm captures a larger deployment
response from governance investment precisely when its capability is high,
which is also when the security discount is largest
(Proposition~\ref{prop:discount}).
\end{proof}

\section{Proof of Proposition~\ref{prop:endogenous}: Endogenous Capability Choice}
\label{app:proof_endogenous}

\begin{proof}
The proof characterizes the global shape of the firm's value function
$V(\theta_c)$ and then compares the two endpoints $\theta_L$ and $\theta_F$.
The key object is the mapping from capability choice into optimal deployment,
because firm value is pinned down by the deployment rule.

Let $\theta_c\in\{\theta_L,\theta_F\}$ denote the firm's chosen capability.
From Lemma~\ref{lem:reduced}, the reduced-form profit in each regime takes the
form
\[
\pi(\alpha;\theta_c)=C(\theta_c)\alpha-\frac{\alpha^2}{2},
\]
where $C(\theta_c)$ is the regime-specific linear coefficient. Since the
optimal deployment is $\alpha^*(\theta_c)=C(\theta_c)$, substituting back
yields
\[
V(\theta_c)\equiv \pi^*(\theta_c)
= C(\theta_c)^2-\frac{C(\theta_c)^2}{2}
= \frac{[\alpha^*(\theta_c)]^2}{2}.
\]
Thus, capability affects firm value only through its effect on the optimal
deployment scale.

Differentiating with respect to $\theta_c$ gives
\[
\frac{dV}{d\theta_c}
=
\alpha^*(\theta_c)\frac{d\alpha^*}{d\theta_c}.
\]
Under Assumption~\ref{assump:interior}, $\alpha^*(\theta_c)>0$, so the sign of
$dV/d\theta_c$ is the same as the sign of $d\alpha^*/d\theta_c$. The value
comparison therefore reduces to the comparative static for deployment.

If $\lambda\le 1$, Proposition~\ref{prop:paradox} gives
$d\alpha^*/d\theta_c\ge 0$ for all $\theta_c>0$. Hence $V(\theta_c)$ is
weakly increasing, and the firm weakly prefers the frontier system. In low-gap
environments, stronger capability therefore preserves the standard monotone
adoption logic.

If $\lambda>1$, Proposition~\ref{prop:paradox} implies that
$d\alpha^*/d\theta_c<0$ for $\theta_c<\lambda$, equals zero at
$\theta_c=\lambda$, and is positive for $\theta_c>\lambda$. Therefore
$V(\theta_c)$ is strictly decreasing on $(0,\lambda)$ and strictly increasing
on $(\lambda,\infty)$, with a unique global minimum at $\theta_c=\lambda$.
The maximizer over any interval $[\theta_L,\theta_F]$ must therefore occur at
an endpoint. This is the sense in which weak governance can make higher
capability locally value-reducing before the value function recovers.

When both $\theta_L$ and $\theta_F$ lie in the interior regime
($\lambda\theta_L>1$), the firm upgrades if and only if
$V(\theta_F)>V(\theta_L)$, equivalently
$\alpha^*(\theta_F)>\alpha^*(\theta_L)$. Using
\[
\alpha^*(\theta)=\theta+\mu+1-2\sqrt{\lambda\theta},
\]
this condition becomes
\[
\theta_F-2\sqrt{\lambda\theta_F}
>
\theta_L-2\sqrt{\lambda\theta_L}.
\]
Rearranging,
\[
\theta_F-\theta_L
>
2\sqrt{\lambda}\bigl(\sqrt{\theta_F}-\sqrt{\theta_L}\bigr).
\]
Using
\[
\theta_F-\theta_L
=
(\sqrt{\theta_F}-\sqrt{\theta_L})(\sqrt{\theta_F}+\sqrt{\theta_L}),
\]
and dividing by $\sqrt{\theta_F}-\sqrt{\theta_L}>0$, we obtain
\[
\sqrt{\theta_F}+\sqrt{\theta_L}>2\sqrt{\lambda}.
\]
Equivalently, given $\theta_F>\theta_L$,
\[
\theta_F>
\left[\max\{0,\,2\sqrt{\lambda}-\sqrt{\theta_L}\}\right]^2.
\]
Thus the effective threshold over the feasible set $\theta_F>\theta_L$ is
\[
\bar{\theta}_F
=
\max\left\{
\theta_L,\,
\left[\max\{0,\,2\sqrt{\lambda}-\sqrt{\theta_L}\}\right]^2
\right\}.
\]
When $2\sqrt{\lambda}\le \sqrt{\theta_L}$, the second term is zero, so the
upgrade condition is automatically satisfied for any feasible
$\theta_F>\theta_L$. This threshold identifies how far the frontier system must
move beyond the relevant loss environment before the firm prefers the upgrade
again.

When the legacy system lies in the corner regime ($\lambda\theta_L\le 1$),
its value is
\[
V(\theta_L)=\frac{[\mu+\theta_L(1-\lambda)]^2}{2},
\]
and the upgrade condition $V(\theta_F)>V(\theta_L)$ must be evaluated
directly.
\end{proof}

\section{Proof of Proposition~\ref{prop:social_A}: First-Best Social Optimum}
\label{app:proof_social_A}

\begin{proof}
This proof reuses the private problem almost entirely. Once the social planner
internalizes the externality, the firm's loss environment is scaled from
$\lambda$ to $(1+e)\lambda$, so the first-best problem is a direct analogue of
the private benchmark with that substitution.

The social planner jointly maximizes
\[
W=\pi-e\cdot p\cdot L
\]
over $(\alpha,d)$. Since the firm's objective $\pi$ already includes the
private loss term $-p\cdot L$, the planner's objective can be written as
\[
W
=
(\theta+\mu)\alpha-\frac{\alpha^2}{2}
-(1+e)\frac{\lambda\alpha^2\theta}{\alpha+d}
-d.
\]
This is structurally identical to the firm's problem in \eqref{eq:profit},
except that $\lambda$ is replaced everywhere by $(1+e)\lambda$. Thus, the
first-best planner behaves as if the effective loss environment were more
severe than the firm perceives privately.

Applying Proposition~\ref{prop:alpha_star} with this substitution yields
\[
\alpha^{**}_{FB}(\theta)
=
\max\!\left\{0,\;
\begin{cases}
\mu+\theta\bigl(1-(1+e)\lambda\bigr), & (1+e)\lambda\theta\le 1,\\[4pt]
\theta+\mu+1-2\sqrt{(1+e)\lambda\theta}, & (1+e)\lambda\theta>1.
\end{cases}
\right\}.
\]

The $\max\{0,\cdot\}$ operator is needed because internalizing the externality
can push the planner to the corner solution $\alpha^{**}_{FB}=0$ even when the
private optimum remains positive. Socially optimal deployment is therefore
weakly below private deployment even before comparing regime boundaries.
\end{proof}

\section{Proof of Proposition~\ref{prop:social_B}: Second-Best Social Optimum}
\label{app:proof_social_B}

\begin{proof}
Unlike the first-best planner, the second-best regulator controls deployment
$\alpha$ but does not control the defense margin directly.
The firm therefore continues to choose defense according to
Lemma~\ref{lem:dstar}, and the regulator takes that private response as given
when choosing deployment.
Substituting the firm's privately optimal defense choice $d^*(\alpha)$ into
social welfare yields
\[
W_{SB}
=
\pi(\alpha,d^*)-e\cdot p(\alpha,d^*)\cdot L(\alpha,\theta).
\]

\medskip
\noindent\textbf{Corner regime ($\lambda\theta\le 1$).}
From Lemmas~\ref{lem:dstar}--\ref{lem:pstar}, we have $d^*=0$ and $p^*=1$.
Hence
\[
W_{SB}
=
(\theta+\mu)\alpha-\frac{\alpha^2}{2}-(1+e)\lambda\alpha\theta.
\]
This is a concave quadratic in $\alpha$, so the first-order condition yields
\[
\alpha^{**}_{SB}
=
\mu+\theta\bigl(1-(1+e)\lambda\bigr).
\]

\medskip
\noindent\textbf{Interior regime ($\lambda\theta>1$).}
From Lemmas~\ref{lem:dstar}--\ref{lem:pstar},
\[
d^*=\alpha(\sqrt{\lambda\theta}-1),
\qquad
p^*\cdot L=\alpha\sqrt{\lambda\theta}.
\]
Substituting into $W_{SB}$ gives
\begin{align*}
W_{SB}
&=
(\theta+\mu)\alpha-\frac{\alpha^2}{2}
-\alpha\sqrt{\lambda\theta}
-\alpha(\sqrt{\lambda\theta}-1)
-e\,\alpha\sqrt{\lambda\theta}\\
&=
\bigl[\theta+\mu+1-(2+e)\sqrt{\lambda\theta}\bigr]\alpha
-\frac{\alpha^2}{2}.
\end{align*}
Again this is a concave quadratic in $\alpha$, so the first-order condition
yields
\[
\alpha^{**}_{SB}
=
\theta+\mu+1-(2+e)\sqrt{\lambda\theta}.
\]

\medskip
Intuitively, once the regulator is limited to deployment control alone, it must
use $\alpha$ to manage both the firm's private security burden and the external
harm that the firm does not internalize.
In the corner regime, this scales up the private loss term $\lambda\theta$ by
the factor $(1+e)$.
In the interior regime, the firm's private defense choice already implies a
reduced-form security deduction of $(2\sqrt{\lambda\theta}-1)\alpha$, and the
externality adds a further $e\sqrt{\lambda\theta}\alpha$.
That is why the reduced-form linear coefficient becomes
$\theta+\mu+1-(2+e)\sqrt{\lambda\theta}$.

Combining the two regimes and imposing non-negativity gives
\[
\alpha^{**}_{SB}(\theta)
=
\max\!\left\{0,\;
\begin{cases}
\mu+\theta\bigl(1-(1+e)\lambda\bigr), & \lambda\theta\le 1,\\[4pt]
\theta+\mu+1-(2+e)\sqrt{\lambda\theta}, & \lambda\theta>1.
\end{cases}
\right\}.
\]

The regime threshold remains $\lambda\theta=1$ because the regulator controls
deployment but not the firm's defense rule.
Regulation can change the level of deployment, but it cannot change the
condition under which the firm finds defensive expenditure worthwhile.
This is why deployment regulation is blunter than joint control.
\end{proof}

\section{Proof of Proposition~\ref{prop:externality_expansion}: Externalities and the Social Paradox Region}
\label{app:proof_externality}

\begin{proof}
Let $x\equiv\lambda\theta$ and $E\equiv 1+e>1$. Define the private security
discount function
\[
H(x)=
\begin{cases}
x, & x\le 1,\\[3pt]
2\sqrt{x}-1, & x>1.
\end{cases}
\]
Then the private optimum can be written as
\[
\alpha^*(\theta)=\mu+\theta-H(x),
\]
where Assumption~\ref{assump:interior} ensures that this expression is
strictly positive.

The first-best planner internalizes the full social loss. Its problem is
equivalent to the private problem with $\lambda$ replaced by $E\lambda$.
Therefore,
\[
\alpha^{**}_{FB}(\theta)
=
\max\{0,\mu+\theta-H(Ex)\}.
\]

The second-best regulator controls deployment but not the firm's defense
choice. The firm therefore chooses defense using the private threshold $x=1$.
Define the second-best social security discount as
\[
S_e(x)=
\begin{cases}
Ex, & x\le 1,\\[3pt]
(2+e)\sqrt{x}-1, & x>1.
\end{cases}
\]
Then
\[
\alpha^{**}_{SB}(\theta)
=
\max\{0,\mu+\theta-S_e(x)\}.
\]

\medskip
\noindent\textbf{Part 1: Ordering of private, first-best, and second-best deployment.}
Because $H(\cdot)$ is increasing and $E>1$, we have
\[
H(Ex)\ge H(x).
\]
It follows that
\[
\max\{0,\mu+\theta-H(Ex)\}\le \mu+\theta-H(x),
\]
so
\[
\alpha^{**}_{FB}(\theta)\le \alpha^*(\theta).
\]

It remains to show that $\alpha^{**}_{SB}(\theta)\le\alpha^{**}_{FB}(\theta)$.
It is sufficient to show that
\[
S_e(x)\ge H(Ex)
\]
for all $x>0$.

First suppose $x\le 1$. Then $S_e(x)=Ex$. If $Ex\le 1$, then
$H(Ex)=Ex$, so the two are equal. If $Ex>1$, then
\[
H(Ex)=2\sqrt{Ex}-1.
\]
Let $y=Ex>1$. Since
\[
y-(2\sqrt{y}-1)=(\sqrt{y}-1)^2\ge 0,
\]
we have $Ex\ge H(Ex)$.

Now suppose $x>1$. Then
\[
S_e(x)=(2+e)\sqrt{x}-1=(E+1)\sqrt{x}-1,
\]
while
\[
H(Ex)=2\sqrt{Ex}-1=2\sqrt{E}\sqrt{x}-1.
\]
Hence
\[
S_e(x)-H(Ex)
=
\bigl(E+1-2\sqrt{E}\bigr)\sqrt{x}
=
(\sqrt{E}-1)^2\sqrt{x}\ge 0.
\]
Thus $S_e(x)\ge H(Ex)$ for all $x>0$, which implies
\[
\alpha^{**}_{SB}(\theta)\le\alpha^{**}_{FB}(\theta).
\]
Together, these inequalities prove
\[
\alpha^{**}_{SB}(\theta)\le
\alpha^{**}_{FB}(\theta)\le
\alpha^*(\theta).
\]
The inequalities are weak because the max operator can bind. For example, if
both unconstrained social optima are negative, then
$\alpha^{**}_{SB}=\alpha^{**}_{FB}=0$. When the relevant social optima are
interior, the inequalities become strict whenever the corresponding discount
inequality is strict.

\medskip
\noindent\textbf{Part 2: First-best paradox conditions.}
The first-best optimum is the private optimum with $\lambda$ replaced by
$E\lambda$. In the first-best corner regime, $E\lambda\theta\le 1$, and
\[
\alpha^{**}_{FB}=\mu+\theta(1-E\lambda).
\]
Therefore,
\[
\frac{\partial \alpha^{**}_{FB}}{\partial \theta}
=
1-E\lambda.
\]
The first-best corner-regime paradox condition is
\[
E\lambda>1,
\]
or equivalently
\[
(1+e)\lambda>1.
\]

In the first-best interior regime, $E\lambda\theta>1$, and
\[
\alpha^{**}_{FB}
=
\theta+\mu+1-2\sqrt{E\lambda\theta}.
\]
Thus
\[
\frac{\partial \alpha^{**}_{FB}}{\partial \theta}
=
1-\sqrt{\frac{E\lambda}{\theta}}.
\]
The first-best interior-regime paradox condition is therefore
\[
\theta<E\lambda=(1+e)\lambda.
\]

\medskip
\noindent\textbf{Part 3: Second-best paradox conditions.}
In the second-best case, the firm still chooses defense using the private
threshold $\lambda\theta=1$. In the corner regime, $\lambda\theta\le 1$, and
\[
\alpha^{**}_{SB}
=
\mu+\theta(1-E\lambda).
\]
Therefore,
\[
\frac{\partial \alpha^{**}_{SB}}{\partial \theta}
=
1-E\lambda.
\]
The second-best corner-regime paradox condition is
\[
(1+e)\lambda>1.
\]

In the second-best interior regime, $\lambda\theta>1$, and
\[
\alpha^{**}_{SB}
=
\theta+\mu+1-(2+e)\sqrt{\lambda\theta}.
\]
Thus
\[
\frac{\partial \alpha^{**}_{SB}}{\partial \theta}
=
1-\frac{2+e}{2}\sqrt{\frac{\lambda}{\theta}}.
\]
The second-best interior-regime paradox condition is therefore
\[
\theta<
\left(\frac{2+e}{2}\right)^2\lambda.
\]

\medskip
\noindent\textbf{Part 4: Expansion relative to the private paradox region.}
The private interior paradox condition is $\theta<\lambda$. The first-best
interior threshold is $(1+e)\lambda$, and the second-best interior threshold is
\[
\left(\frac{2+e}{2}\right)^2\lambda.
\]
Because $e>0$,
\[
(1+e)\lambda>\lambda
\]
and
\[
\left(\frac{2+e}{2}\right)^2\lambda>\lambda.
\]
Thus the social paradox regions are weakly wider than the private paradox
region.

There also exist parameter values such that
\[
\theta>\lambda
\quad\text{but}\quad
\theta<(1+e)\lambda,
\]
or
\[
\theta>\lambda
\quad\text{but}\quad
\theta<
\left(\frac{2+e}{2}\right)^2\lambda.
\]
For such values, the firm is outside the private paradox region but inside the
social paradox region. This proves that externalities can convert a privately
normal capability response into a socially paradoxical one.
\end{proof}

\section{Interpretation of the Damage Channel}
\label{app:damage_channel}

The baseline $L=\lambda\alpha\theta$ reflects the gap: $\theta$ co-varies with
authority exposure because governance has not constrained the combination. See
Assumption~\ref{assump:bundling} for the formal statement. $\mu$-type organizational readiness investments raise productivity without expanding the
damage channel, because they do not co-vary with authority exposure under gap
conditions. Appendix~\ref{app:mu_damage} shows the paradox is robust to partial inclusion of $\mu$ in the damage
channel.

\section{Complementary Organizational Readiness in the Damage Channel}
\label{app:mu_damage}

The baseline model assumes that complementary organizational readiness $\mu$
raises productivity but does not expand conditional breach damage. This
section relaxes that separation. Suppose
\[
L(\alpha,\theta,\mu)=\lambda\alpha(\theta+\omega\mu),
\qquad \omega\in[0,1].
\]
The parameter $\omega$ captures the extent to which readiness-enhancing
investments also expand operational exposure.

The optimal deployment expression becomes
\[
\alpha^*_\omega =
\max\!\left\{0,\;
\begin{cases}
    \mu+\theta-\lambda(\theta+\omega\mu),
    & \lambda(\theta+\omega\mu)\le 1,\\[4pt]
    \theta+\mu+1-2\sqrt{\lambda(\theta+\omega\mu)},
    & \lambda(\theta+\omega\mu)>1.
\end{cases}
\right\}.
\]
In the interior regime,
\[
\frac{\partial\alpha^*_\omega}{\partial\theta}
=
1-\sqrt{\frac{\lambda}{\theta+\omega\mu}}.
\]
Thus the paradox condition becomes
\[
\theta+\omega\mu<\lambda.
\]
The baseline case $\omega=0$ gives $\theta<\lambda$. Allowing $\mu$ to enter
the damage channel shifts the boundary, but it does not remove the mechanism.
For any given $(\theta,\lambda,\mu)$ with $\theta<\lambda$, the paradox
continues to hold for
\[
\omega<\frac{\lambda-\theta}{\mu}.
\]
Thus partial exposure spillovers from organizational readiness narrow the
paradox region but do not overturn the result whenever the spillover is
sufficiently limited.

\section{Generalized Authority Exposure and Breach Damage}
\label{app:generalized_L}

The baseline model uses the normalization $a(\theta)=\theta$. This appendix
relaxes that normalization by setting
\[
a(\theta)=\theta^\gamma,\qquad \gamma\ge 0,
\]
so that conditional breach damage becomes
\[
L(\alpha,\theta)=\lambda\alpha a(\theta)
=
\lambda\alpha\theta^\gamma.
\]
The parameter $\gamma$ captures the elasticity of authority exposure, and
therefore conditional breach damage, with respect to capability. The baseline
model sets $\gamma=1$.

Given this damage function, optimal deployment becomes
\begin{equation}
    \alpha^*_\gamma =
    \max\!\left\{0,\;
    \begin{cases}
        \mu+\theta-\lambda\theta^\gamma,
        & \lambda\theta^\gamma\le 1, \\[4pt]
        \theta+\mu+1-2\sqrt{\lambda\theta^\gamma},
        & \lambda\theta^\gamma>1.
    \end{cases}
    \right\}.
    \label{eq:alpha_gamma}
\end{equation}

In the corner regime, the derivative with respect to capability is
\[
\frac{\partial\alpha^*_\gamma}{\partial\theta}
=
1-\lambda\gamma\theta^{\gamma-1}.
\]
Thus sign reversal occurs in the corner regime when
\[
\lambda\gamma\theta^{\gamma-1}>1.
\]

In the interior regime, the derivative is
\[
\frac{\partial\alpha^*_\gamma}{\partial\theta}
=
1-\gamma\sqrt{\lambda}\,\theta^{\gamma/2-1}.
\]
Thus sign reversal occurs in the interior regime when
\[
\gamma\sqrt{\lambda}\,\theta^{\gamma/2-1}>1.
\]
Equivalently, for $\gamma\ne 2$, this condition can be written as
\[
\theta^{2-\gamma}<\gamma^2\lambda,
\]
with the direction interpreted through the preceding derivative condition.
For $\gamma=2$, the interior derivative is
\[
1-2\sqrt{\lambda},
\]
so sign reversal occurs when $\lambda>1/4$.

These expressions show that the paradox does not depend on treating authority
exposure and capability as identical. It requires a sufficiently strong
capability-linked exposure channel over the relevant parameter region. When
$\gamma=0$, capability has no effect on authority exposure or conditional
breach damage. The derivative in both regimes is then nonnegative, and the
deployment paradox disappears.

\section{Generalized Breach Probability and the Attack-Surface Channel}
\label{app:generalized_p}

We generalize the breach-probability function to
\[
p(\alpha,d)=\frac{\alpha^\beta}{\alpha^\beta+d},\qquad \beta\ge 1.
\]
The baseline case is $\beta=1$. This appendix studies how superlinear
attack-surface growth changes local deployment incentives under interior
feasibility. The goal is not to establish a global theorem for all probability
forms, but to show that once breach probability grows more than proportionally
with deployment, the attack-surface channel can reinforce the paradox.

Under this specification, expected security loss becomes
\[
p(\alpha,d)\cdot L(\alpha,\theta)
=
\frac{\alpha^\beta}{\alpha^\beta+d}\cdot \lambda\alpha\theta
=
\frac{\lambda\alpha^{\beta+1}\theta}{\alpha^\beta+d}.
\]

The firm's objective is therefore
\[
\pi(\alpha,d)
=
(\theta+\mu)\alpha-\frac{\alpha^2}{2}
-\frac{\lambda\alpha^{\beta+1}\theta}{\alpha^\beta+d}
-d.
\]

For a given $\alpha$, differentiate with respect to $d$:
\[
\frac{\partial \pi}{\partial d}
=
\frac{\lambda\alpha^{\beta+1}\theta}{(\alpha^\beta+d)^2}-1.
\]
The interior first-order condition implies
\[
(\alpha^\beta+d)^2=\lambda\alpha^{\beta+1}\theta,
\]
so
\[
\alpha^\beta+d=\alpha^{(\beta+1)/2}\sqrt{\lambda\theta},
\]
and thus
\[
d^*_\beta(\alpha,\theta)
=
\alpha^{(\beta+1)/2}\sqrt{\lambda\theta}-\alpha^\beta.
\]

Substituting this expression into the generalized breach-probability function
yields
\[
p^*_\beta
=
\frac{\alpha^\beta}{\alpha^{(\beta+1)/2}\sqrt{\lambda\theta}}
=
\frac{\alpha^{(\beta-1)/2}}{\sqrt{\lambda\theta}}.
\]
For $\beta>1$, $p^*_\beta$ is increasing in $\alpha$: the firm can no longer
fully offset superlinear attack-surface growth through proportionate defense.

Substituting $d^*_\beta$ and $p^*_\beta$ back into profit gives the
reduced-form objective
\[
\pi^*_\beta(\alpha)
=
(\theta+\mu)\alpha-\frac{\alpha^2}{2}
-2\alpha^{(\beta+1)/2}\sqrt{\lambda\theta}
+\alpha^\beta.
\]

The first-order condition for deployment is
\[
F(\alpha,\theta)
\equiv
\theta+\mu-\alpha
-(\beta+1)\alpha^{(\beta-1)/2}\sqrt{\lambda\theta}
+\beta\alpha^{\beta-1}
=0.
\]

To study the local comparative static with respect to capability, apply the
implicit function theorem:
\[
\frac{d\alpha^*}{d\theta}
=
-\frac{F_\theta}{F_\alpha}.
\]
Under interior concavity, the sign of $d\alpha^*/d\theta$ is governed by the
sign of $F_\theta$, where
\[
F_\theta
=
1-\frac{\beta+1}{2}\alpha^{(\beta-1)/2}\sqrt{\frac{\lambda}{\theta}}.
\]
Hence a local sign-reversal condition is
\[
\frac{\beta+1}{2}\alpha^{(\beta-1)/2}\sqrt{\frac{\lambda}{\theta}} > 1,
\]
equivalently,
\[
\frac{(\beta+1)\alpha^{(\beta-1)/2}\sqrt{\lambda}}{2\sqrt{\theta}} > 1.
\]

This condition nests the baseline case. When $\beta=1$, it reduces to
\[
\sqrt{\frac{\lambda}{\theta}}>1
\quad\Longleftrightarrow\quad
\theta<\lambda,
\]
which is exactly the interior paradox condition in
Proposition~\ref{prop:paradox}. For $\beta>1$, the additional factor
$\alpha^{(\beta-1)/2}$ makes the inequality easier to satisfy as deployment
expands. The baseline $\beta=1$ should therefore be read as a conservative
benchmark: once governance failures make attack-surface growth superlinear, the
attack-surface channel can reinforce the paradox locally.

\section{Generalized Capability Productivity}
\label{app:generalized_prod}

We generalize the productivity term from $\theta\alpha$ to
$\theta^\eta\alpha$, where $\eta>0$. The firm's objective becomes
\[
\pi(\alpha,d)
=
(\theta^\eta+\mu)\alpha-\frac{\alpha^2}{2}
-\frac{\alpha}{\alpha+d}\lambda\alpha\theta
-d.
\]
The damage channel remains $L=\lambda\alpha\theta$, so this extension varies
only the productivity side of the model.

The optimal deployment level is
\[
\alpha^*_\eta =
\max\!\left\{0,\;
\begin{cases}
    \mu+\theta^\eta-\lambda\theta,
    & \lambda\theta\le 1,\\[4pt]
    \theta^\eta+\mu+1-2\sqrt{\lambda\theta},
    & \lambda\theta>1.
\end{cases}
\right\}.
\]

In the corner regime,
\[
\frac{\partial\alpha^*_\eta}{\partial\theta}
=
\eta\theta^{\eta-1}-\lambda.
\]
Thus sign reversal occurs when
\[
\eta\theta^{\eta-1}<\lambda.
\]

In the interior regime,
\[
\frac{\partial\alpha^*_\eta}{\partial\theta}
=
\eta\theta^{\eta-1}
-
\sqrt{\lambda}\theta^{-1/2}.
\]
Thus sign reversal occurs when
\[
\eta\theta^{\eta-1/2}<\sqrt{\lambda}.
\]
At $\eta=1$, this reduces to
\[
\theta<\lambda,
\]
which is the baseline interior paradox condition.

This extension shows how the paradox depends on the relative curvature of the
productivity and damage channels. Stronger productivity scaling narrows the
paradox region. Stronger loss exposure widens it. The baseline model therefore
captures the transparent case in which both channels scale linearly with
capability before optimal defense is chosen.

\section{Comparative Statics}
\label{app:comp_statics}

\begin{table}[htbp]
\centering
\caption{Comparative Statics (Interior Solution)}
\label{tab:comp_statics}
\begin{tabular}{lcccc}
\toprule
& $\alpha^*$ & $d^*$ & $p^*$ & $\pi^*$ \\
\midrule
$\uparrow\theta$ & $+$ if $\theta>\lambda$; $-$ if $\theta<\lambda$
    & ambiguous & $-$ & same sign as $\partial\alpha^*/\partial\theta$ \\
$\uparrow\lambda$ & $-$ & per-unit $\uparrow$, total ambiguous & $-$ & $-$ \\
$\uparrow\mu$     & $+$ & $+$ & 0 & $+$ \\
\bottomrule
\end{tabular}
\\\smallskip
\small $\pi^*=(\alpha^*)^2/2$ at the interior optimum.
\end{table}

\begin{figure}[H]
    \centering
    \includegraphics[width=0.68\textwidth]{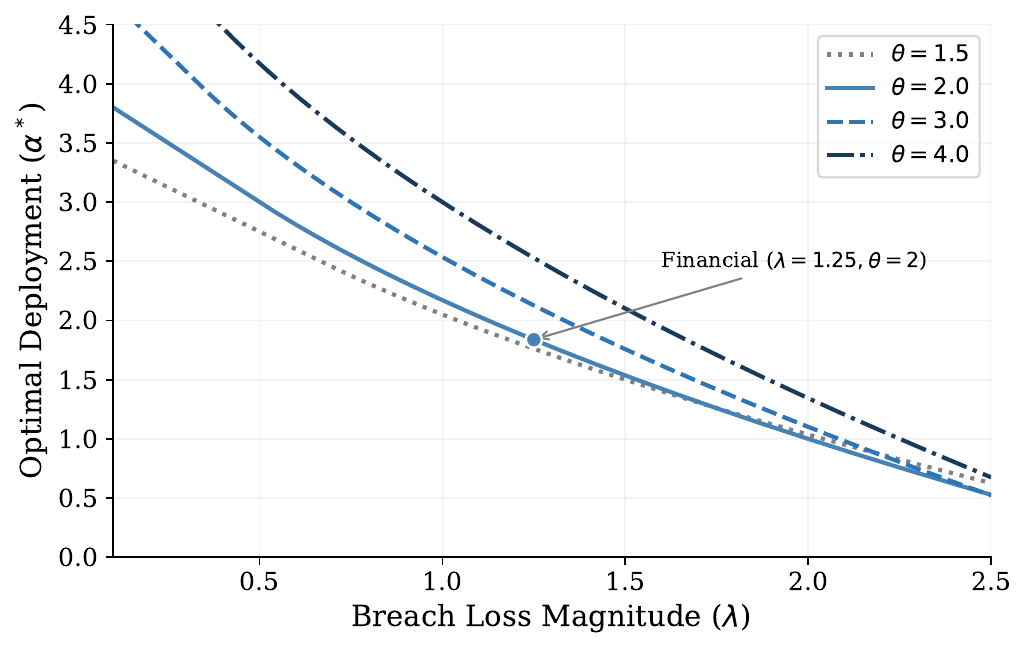}
    \caption{Governance maturity and AI deployment
    (Proposition~\ref{prop:complementarity}). $\alpha^*$ as a function of
    $\lambda$ for $\theta\in\{1.5,2,3,4\}$, $\mu=2$. All curves monotonically
    decreasing: lower $\lambda$ enables deeper deployment.}
    \label{fig:complement}
\end{figure}

\begin{figure}[H]
    \centering
    \includegraphics[width=0.68\textwidth]{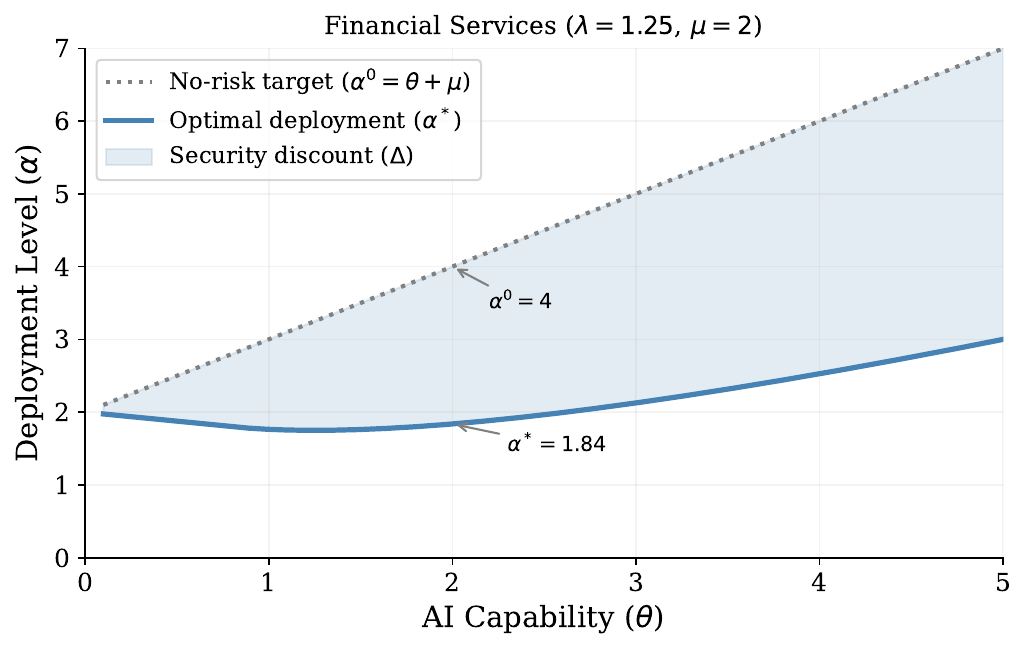}
    \caption{The security discount ($\lambda=1.25$, $\mu=2$). Shaded area:
    deployment foregone due to the gap. At $\theta=2$: $\alpha^*=1.84$ vs.\
    $\alpha^0=4$, a 54\% gap-induced security discount.}
    \label{fig:discount_app}
\end{figure}

\begin{figure}[H]
    \centering
    \includegraphics[width=0.68\textwidth]{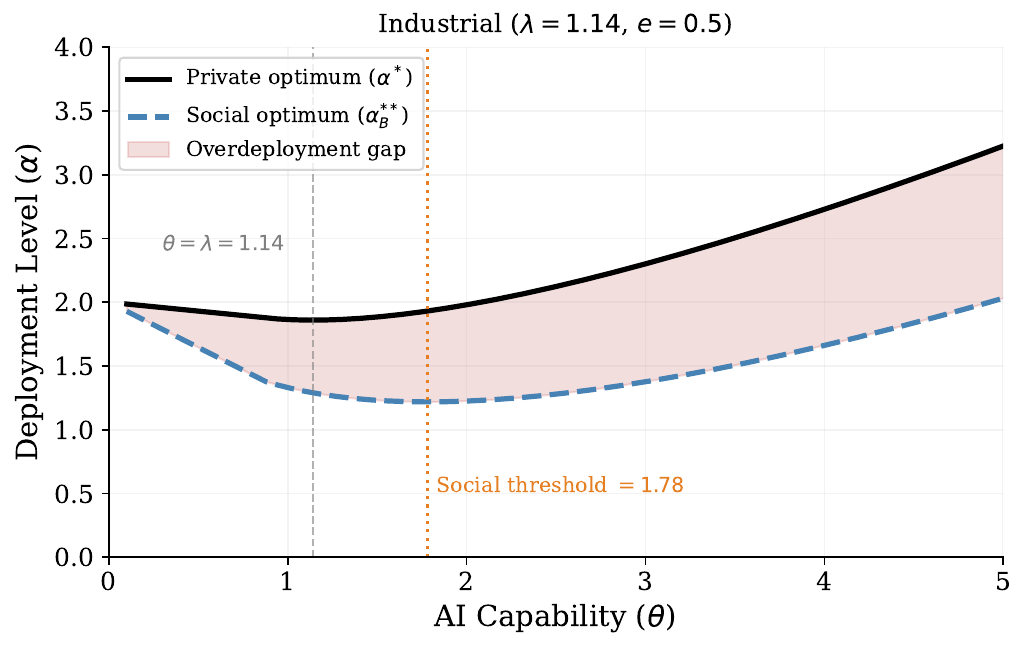}
    \caption{Externality expands the paradox ($\lambda=1.14$, $e=0.5$,
    $\mu=2$). Solid: $\alpha^*$; dashed: $\alpha^{**}_{SB}$. Private paradox
    threshold $\theta=1.14$ (gray dashed); social paradox threshold
    $\theta=1.78$ (orange dotted).}
    \label{fig:social_app}
\end{figure}

\section{Extensions}
\label{app:extensions}

\subsection{Governance Adjustment and the Paradox Region}
\label{app:gov_adjustment}

When capability improvements are not matched by governance investment, the
paradox region persists over a wider range of conditions. By contrast, when
governance investment reduces $\lambda$, the paradox region narrows and the
negative capability--deployment relationship weakens. Industries that improve
governance maturity alongside capability adoption should therefore experience
an earlier exit from the paradox region.

\begin{remark}[Governance Adjustment and Boundary Conditions]
\label{remark:dynamic}
Consider an extension in which capability improvement is accompanied by
governance investment that reduces $\lambda$. The paradox region is defined by
$\theta < \lambda$. Any force that raises $\theta$ or lowers $\lambda$ moves
the firm toward the standard monotone relationship. Three cases apply:
(1)~\textbf{Governance-led adjustment}: paradox region shrinks because
governance investment lowers $\lambda$. (2)~\textbf{Capability-led
adjustment}: paradox region shrinks because capability rises beyond the
threshold. (3)~\textbf{Limited governance adjustment}: if $\lambda$ remains
high, the paradox region remains broader. A richer dynamic optimization is
beyond the scope of the current paper.
\end{remark}

\subsection{Endogenous Governance and Rational Security Debt}
\label{app:endogenous_gov}

A natural critique of the baseline model is that firms suffering from the
deployment paradox should immediately invest to eliminate the governance gap.
We show here that when governance restructuring entails convex organizational
costs, a residual gap may optimally persist.

Suppose the firm inherits $\lambda_0 > 1$ and can invest $I \in [0, \lambda_0]$
to achieve $\lambda(I) = \lambda_0 - I$, at convex cost $C(I) = kI^2/2$, where 
$k > 0$ captures institutional friction. The firm maximizes $W(I) = V(\lambda_0 - I) - kI^2/2$, 
where $V(\lambda)$ denotes the maximized baseline profit. At an interior optimum,
the FOC is $-V'(\lambda(I^*)) = kI^*$. By the envelope theorem applied to the 
interior-regime profit $\pi(\alpha, d; \lambda)$, the marginal value of reducing $\lambda$ is:
\begin{equation}
    -V'(\lambda) = -\left.\frac{\partial\pi(\alpha^*, d^*;\,\lambda)}{\partial\lambda}\right|
        = \frac{(\alpha^*)^2\theta}{\alpha^* + d^*}.
    \label{eq:gov_mb}
\end{equation}
Substituting the optimal defense $d^* = \alpha^*(\sqrt{\lambda\theta}-1)$
gives $\alpha^* + d^* = \alpha^*\sqrt{\lambda\theta}$, which simplifies
\eqref{eq:gov_mb} to:
\begin{equation}
    -V'(\lambda) = \alpha^*(\lambda)\sqrt{\theta/\lambda} > 0.
    \label{eq:gov_mb2}
\end{equation}
This holds in the interior regime ($\lambda\theta > 1$); in the corner
regime ($\lambda\theta \le 1$), $d^*=0$ and an analogous calculation yields
$-V'(\lambda) = \theta\cdot\alpha^* > 0$, confirming the marginal benefit of
reducing $\lambda$ is strictly positive across both regimes.

Whenever the solution is interior, the optimal governance investment satisfies:
\begin{equation}
    k I^* =
    \begin{cases}
        \alpha^*(\lambda^*)\sqrt{\theta/\lambda^*}, & \lambda^*\theta > 1, \\[4pt]
        \theta\cdot\alpha^*(\lambda^*),             & \lambda^*\theta \le 1,
    \end{cases}
    \qquad \lambda^* \equiv \lambda_0 - I^*.
    \label{eq:gov_foc}
\end{equation}

\begin{remark}[Rational Accumulation of Security Debt]
\label{remark:security_debt}
As $k \to \infty$, $I^* \to 0$ and $\lambda^* \to \lambda_0$. By continuity,
if the inherited environment initially lies in the paradox region---that is,
if $\lambda_0 > \max\{1, \theta\}$---there exists a threshold $\bar{k}$ such 
that $\lambda^* > \max\{1, \theta\}$ for all $k > \bar{k}$: the firm remains 
in the paradox region despite having a positive incentive to reduce $\lambda$. 
The persistence of the governance-capability gap is therefore a structural 
consequence of convex organizational friction, not managerial myopia. From 
\eqref{eq:gov_foc}, external interventions that reduce adjustment frictions or require minimum
governance investment---such as governance subsidies or segmentation
standards---can raise effective governance investment and endogenously contract
the paradox region, consistent with the $\lambda$-reducing policy channel
discussed in Section~\ref{sec:discussion}.
\end{remark}

\subsection{Strategic Attacker}
\label{app:strategic_attacker}

If attacker capability scales with $\theta$, breach risk becomes more sensitive
to AI deployment and capability, further tightening the paradox region. The 
baseline likely understates the paradox in such environments, but a 
game-theoretic treatment is left for future work.

\subsection{Heterogeneous AI Applications}
\label{app:heterogeneous_apps}

With high-authority ($\alpha_H$) and low-authority ($\alpha_L$) AI, high-gap
firms would shift portfolios toward lower-authority applications as $\theta$
increases within the paradox region---substituting toward AI uses outside the
gap even as aggregate deployment declines.

\end{document}